\newif\ifdouble
\newtheorem{theorem}{Theorem}
\newtheorem{lemma}{Lemma}
\newtheorem{corollary}{Corollary}
\newtheorem{claim}{Claim}
\newcommand{\off}[1]{}
\newcommand{\E}{\mathrm{E}}
\newcommand{\bw}{\mathbf{w}}
\newcommand{\by}{\mathbf{y}}
\newcommand{\bH}{\mathbf{H}}
\newcommand{\bh}{\mathbf{h}}
\newcommand{\bM}{\mathbf{M}}
\newcommand{\bR}{\mathbf{R}}
\newcommand{\bn}{\mathbf{n}}
\newcommand{\bz}{\mathbf{z}}
\newcommand{\bG}{\mathbf{G}}
\newcommand{\bg}{\mathbf{g}}
\newcommand{\bx}{\mathbf{x}}
\newcommand{\bgamma}{\pmb{\gamma}}
\newcommand{\bxi}{\pmb{\xi}}
\newcommand{\balpha}{\pmb{\alpha}}
\newcommand{\bbeta}{\pmb{\beta}}
\newcommand{\prnt}[1]{\left(#1\right)}
\newcommand{\brkt}[1]{\left[#1\right]}
\newcommand{\brcs}[1]{\left\{#1\right\}}
\newcommand{\SEFI}[1]{{\bfseries *** Sefi says: #1 ***}}
\begin{document}
\title{On Secrecy Rates and Outage in Multi-User  Multi-Eavesdroppers MISO Systems}
\author{
Joseph Kampeas, Asaf Cohen and Omer Gurewitz
%\\
%Department of Communication Systems Engineering
\\
Ben-Gurion University of the Negev, Beer-Sheva, 84105, Israel
\\ {\tt Email: \{kampeas,coasaf,gurewitz\}@bgu.ac.il} }
\date{}
\maketitle
\begin{abstract}
In this paper, we study the secrecy rate and outage probability in Multiple-Input-Single-Output (MISO) Gaussian wiretap channels at the limit of a large number of legitimate users and eavesdroppers. In particular, we analyze the asymptotic achievable secrecy rates and outage,  when \emph{only statistical knowledge} on the wiretap channels is available to the transmitter.

The analysis provides exact expressions for the reduction in the secrecy rate as the \emph{number of eavesdroppers grows}, compared to the boost in the secrecy rate as \emph{the number of legitimate users grows}.
\end{abstract}
\maketitle

%%%%%%%%%%%%%%%%%%%%%%%%%%%%%%%%%%%%%%%%%%%%%%%%%%%%%%%
\section{Introduction}
%%%%%%%%%%%%%%%%%%%%%%%%%%%%%%%%%%%%%%%%%%%%%%%%%%%%%%%
The explosive expansion of wireless communication and wireless based services is leading to a growing necessity to provide privacy in such systems. Due to the broadcast nature of the transmission, wireless networks are inherently susceptible to eavesdropping. One of the most promising techniques to overcome this drawback is utilizing Physical-Layer security. Physical-layer security leverages the random nature of communication channels to enable encoding techniques such that eavesdroppers with inferior channel quality are unable to extract any information about the transmitted information from their received signal \cite{wyner1975wire, csiszar1978broadcast, leung1978gaussian}.

%Among its other merits, multiple antenna technology enables further opportunities in exploiting physical-layer security.
Many recent studies have explored the potential gains in exploiting multiple antenna technology for attaining secrecy in various setups. For example, in the case of a single eavesdropper, when the transmitter has a full Channel State Information (CSI) on the wiretap channel, it can ensure inferior wiretap channel by nulling the reception on the eavesdropper's end, thus, achieve higher secrecy rates \cite{shafiee2007achievable,shafiee2009towards,geraci2011secrecy}. When the user and eavesdropper are also equipped with multiple antennas, the optimal strategy is to utilize linear precoding in order to focus energy only in few directions, thus achieving the optimal secrecy rate \cite{oggier2011secrecy}.
In case that \emph{only statistical information} on the wiretap channel is available, the optimal scheme is beamforming in the user's direction~\cite{shafiee2007achievable,li2011ergodic}. However, in this case, a \emph{secrecy outage}, the event that at the eavesdropper is able to extract all or part of the message, is unavoidable. To mitigate risk, one should consider transmitting Artificial Noise (AN) to further degrade the wiretap channel \cite{khisti2010secure,khisti2010secure2,fakoorian2013optimality,6678299,7005544}. 

The secrecy capacity \emph{at the limit of large number of antennas} was considered in \cite{khisti2010secure,khisti2010secure2}. Particularly, \cite{khisti2010secure,khisti2010secure2} studied the asymptotic (in the number of cooperating antennas) secrecy capacity, for a single receiver. %Conversely, we consider the asymptotic (in both, the number of users and eavesdroppers) secrecy rate, where the number of uncooperative antennas remains fixed.
%In \cite{jin2013multi}, the authors considered secured uplink transmission in a SISO setting, when there are $N$ legitimate users and few eavesdroppers, such that $N$ increases exponentially with the number of eavesdroppers.
\cite{krikidis2013secrecy} used Extreme Value Theory (EVT) to study the scaling law of the secrecy sum-rate under a random beamforming scheme, when the users and eavesdroppers \emph{are paired}. That is, each user was susceptible to eavesdropping only by its paired (single) eavesdropper. %In this paper, however, we relax this assumption. Hence, one has to protect the messages from all eavesdroppers, and in particular, when the eavesdroppers do not cooperate, from the strongest eavesdropper. %Scaling law in a relay network was considered in \cite{mirmohseni2013scaling}. Secrecy throughput scaling for a wireless multihop network with source-destination pairs was studied in \cite{koyluoglu2012secrecy}.
\cite{geraci2014secrecy} considered the asymptotic secrecy rate, where both the number of users and number of antennas grow to infinity, while all users are potentially malicious and few external eavesdroppers are wiretapping to the transmissions.

In the presence of a single user and multiple eavesdroppers, where the transmitter has no CSI on the wiretap channels, a secrecy outage will definitely occur as the number of eavesdroppers goes to infinite \cite{wang2007secrecy}. On the other hand, when there are many legitimate users, and the transmitter can select users opportunistically, the secrecy outage probability is open in general. In particular, the asymptotically exact expression to the number of users required in order to attain sufficiently small secrecy outage probability, is yet to be solved. %Obviously, the relation between the number of users to the number of eavesdroppers has a significant impact on the achievable secrecy rates. 
%When multiple eavesdroppers are wiretapping, keeping secrecy is more challenging as it is very likely that the channel quality to at least one of the eavesdroppers is sufficient. The problem is even more acute when the transmitter does not hold the eavesdroppers' CSI. Thus, considering a single user, a secrecy outage will definitely occur as the number of eavesdroppers goes to infinite \cite{wang2007secrecy}. In the presence of multiple users, on the other hand, the transmitter can select users opportunistically, and thus, mitigate the probability of an outage. 
 
 This study analyzes this subtle relation between the number of users, eavesdroppers and the resulting secrecy outage probability.
Specifically, we consider the secrecy rate and outage probability for the Gaussian MISO wiretap channel model, where a transmitter is serving $K$ legitimate users in the presence of $M$ eavesdroppers, and analyze the secrecy outage probability as a function of $K$ and $M$, and more importantly, the relation between these two numbers. 

We assume that CSI is available from all legitimate users, yet \emph{only channel statistics} are available on the eavesdroppers. As previously mentioned, when the  transmitter has only statistical information on the wiretap channel, transmitting in the direction of the attending user is optimal when AN is not allowed. Moreover, in large scale systems, using AN may interfere with other cells, and probably would not be a method of choice even at the price of reduced secrecy rate.
Beamforming to the attending user, on the other hand, is the de-facto transmission scheme in many MISO systems today. Therefore, we adopt the scheme in which at each transmission opportunity the transmitter beamforms in the direction of a user with favorable channel. We analyze the asymptotics of the secrecy rate and secrecy outage under the aforementioned scheme. In particular, our contributions are as follows:
(i)	We first analyze the secrecy rate distribution when transmitting to the strongest user while many eavesdroppers are wiretapping. These results are utilized to attain the secrecy outage probability in the absence of the wiretap channels' CSI. %Thus, given a target secrecy rate, one can compute the limiting secrecy outage probability. %We show that \emph{in the case of many users and many eavesdroppers} the secrecy rate behaves like the ratio of Gumbel distributions. We provide an analytical result for this ratio.% which does not have a closed form expression~\cite{nadarajah2006ratios}.
(ii) We provide both upper and lower bounds on the limiting secrecy rate distribution. The bounds are tractable and give insight on the scaling law and the effect of the system parameters on the secrecy capacity. %To do this, we utilize a novel threshold based approach that enables us to provide analytical results on the ratio of extreme type values.
 We show via simulations that our bounds are tight.
(iii) We quantify the reduction in the secrecy rate as the \emph{number of eavesdroppers grows}, compared to the boost in the secrecy rate as \emph{the number of legitimate users grows}. We show that in order to
attain asymptotically small secrecy outage probability with $t$ transmit antennas, $\Omega\prnt{n \prnt{\log n}^{t-1}}$ users are required in order to compensate for $n$ eavesdroppers in the system. 

\off{
%%%%%%%%%%%%%%%%%%%%%%%%%%%%%%%%%%%%%%%%%%%%%%%%%%%%%%%
\subsubsection{Related Work}
%%%%%%%%%%%%%%%%%%%%%%%%%%%%%%%%%%%%%%%%%%%%%%%%%%%%%%%
The scaling laws of the secrecy capacity was considered in \cite{khisti2010secure2} \emph{at the limit of large number of antennas}. Note that in their model, the number of users remains fixed, hence the asymptotics is in the number of cooperating antennas, unlike the \emph{scheduling setting} we consider.
In \cite{jin2013multi}, the authors consider a \emph{threshold-based} algorithm for secure uplink transmission in a SISO setting, when there are $N$ legitimate users and several eavesdroppers. Scaling laws are considered for the case where $N$ increases exponentially with the number of eavesdroppers.
In \cite{krikidis2013secrecy}, the authors consider opportunistic scheduling for secure downlink transmission. Specifically, a base station serves $K$ users, and selects $M$ users to transmit to, using \emph{orthogonal random beamforming}. However, each user is \emph{paired} with a single eavesdropper, and the goal is to maximize the secrecy sum rate, that is, when concealing the data intended to the users from their \emph{associated eavesdroppers}. Using EVT, scaling laws for the sum rate are given. Note that in the model given herein, the legitimate users are not paired with the eavesdroppers, and one has to protect the messages from the strongest eavesdropper.
Scaling law in a relay network was considered in \cite{mirmohseni2013scaling}, where the focus is on the required number of legitimate nodes versus the number of eavesdroppers required to achieve secure communication. Secrecy throughput scaling for a wireless multihop network with source-destination pairs was studied in \cite{koyluoglu2012secrecy}.
} %end off

%%%%%%%%%%%%%%%%%%%%%%%%%%%%%%%%%%5
%\section{Preliminary}\label{sec. prelim}
\section{System Model}
Throughout this paper, we use bold lower case letters to denote random variables and random vectors, unless stated otherwise. $V^\dagger$ denotes the Hermitian transpose of matrix $V$. Further, $\vert \cdot \vert$, $\langle \cdot, \cdot \rangle$ and $\Vert \cdot \Vert$ denote the absolute value of a scalar, the inner product and the Euclidean norm of vectors, respectively.

  Consider a MISO downlink channel with one transmitter  with $t$ transmit antennas, $K$ legitimate users with a single antenna and $M$ uncooperative eavesdroppers, again, with one antenna each.
The transmitter adopts the scheme in which at each transmission opportunity the transmitter beamforms in the direction of the selected user without AN.
 We assume a block fading channel  where the transmitter can query for fine channel reports from the users before each transmission, while having \emph{only statistical knowledge on the wiretap channels}. %That is, for the upcoming transmission, the transmitter does not have any information on the eavesdropper channel realization, while holding full CSI on the intended user.

Let $\by_{i}$ and $\bz_{j}$ denote the received signals at user $i$ and at eavesdropper $j$, respectively. Then, the received signals can be described as $\by_{i} = \bh_{i}\bx + \bn_{b(i)}$ and $\bz_{j} = \bg_{j}\bx + \bn_{e(j)},$ where $\bh_{i} \in \mathbb{C}^{t \times 1}$ and $\bg_{j} \in \mathbb{C}^{t \times 1}$ are the channel vectors between the transmitter and user $i$, and between the transmitter and  eavesdropper $j$, respectively. $\bh_{i}$ and $\bg_{j}$ are random complex Gaussian channel vectors, where the entries have zero mean and  unit variance in the real and imaginary parts.  %Furthermore, we consider a memoryless channel model, for which, independent channel realizations are drawn after each channel use.
 $\bx \in \mathbb{C}^t$ is the transmitted vector, with a power constraint $\E\brkt{\bx^\dagger \bx} \leq P$, while  $\bn_{b(i)}, \bn_{e(j)} \in \mathbb{C}$ are unit variance Gaussian noises seen at user $i$ and eavesdropper $j$, respectively.

The secrecy capacity for the Gaussian MIMO wiretap channel, where the main and wiretap channels, $\bH$ and $\bG$, respectively, are known at the transmitter,  was given in  \cite{oggier2011secrecy,khisti2010secure2}

{\small{
\begin{equation}
C_s = \max_{\Sigma_{\bx}} \log\det\prnt{I + \bH \Sigma_{\bx} \bH^\dagger } - \log\det\prnt{I + \bG \Sigma_{\bx} \bG^\dagger }
\label{eq: MIMO secrecy capacity}
\end{equation}
}}\normalsize

with $\mathrm{tr}\prnt{\Sigma_{\bx}} \leq P$. For the special case of Gaussian MISO wiretap channel, (\ref{eq: MIMO secrecy capacity}) reduces to
{\small{
\begin{equation*}
C_s = \max_{\Sigma_{\bx}} \log\det\prnt{I + \bh \Sigma_{\bx} \bh^\dagger } - \log\det\prnt{I + \bg \Sigma_{\bx} \bg^\dagger }.
%\label{eq: MISO secrecy capacity}
\end{equation*}
}\normalsize}
In both Gaussian MIMO and MISO, the optimal $\Sigma_{\bx}$ is \emph{low rank}, which means that to achieve the secrecy capacity, the optimal strategy is \emph{transmitting in few directions}. Specifically, for the  Gaussian MISO wiretap channel, the capacity achieving strategy is beamforming to a single direction, hence, letting $\bw$ denote a beam vector, then $\Sigma_{\bx} = \bw \bw^\dagger$, \cite{shafiee2007achievable}. Moreover, when the wiretap channel is unknown at the transmitter, it is optimal to beamform in the direction of the main channel, i.e., $\bw = \hat{\bh} = \bh / \Vert \bh \Vert$, \cite{shafiee2007achievable,li2011ergodic}.

Accordingly, when beamforming in the direction of the user  while the eavesdropper is wiretapping, assuming only the main channel is known to transmitter,  the secrecy capacity is \cite{wang2007secrecy,barros2006secrecy}:
\begin{equation}
\bR_{s}(\bh,\bg) = \log \prnt{\frac{1 +  P \Vert \bh \Vert^2}{ 1 +  P \vert \langle \hat{\bh}, \bg\rangle \vert^2}}.
\label{eq: sec capacity general}
\end{equation}
 Recall that in the block fading environment, $\bh$ and $\bg$ are random variables and are drawn from  the Gaussian distribution independently after each block (slot). Hence, the distribution of the ratio in (\ref{eq: sec capacity general}) and its support are critical to obtain important performance metrics. In particular, the \emph{ergodic secrecy rate}, i.e., the secrecy rate when considering coding over a large number of time-slots, can be obtained by computing an expectation with respect to the fading of both $\bg$ and $\bh$. Similarly, a certain target secrecy rate $R_s$ is achievable if the instantaneous ratio in (\ref{eq: sec capacity general}) is greater than the matching value.
On the other hand, a \emph{secrecy outage} occurs if $R_s$ is greater than the instantly achievable secrecy rate $\bR_{s}(\bh,\bg)$, and thus, the message cannot be delivered securely~\cite{barros2006secrecy}. The probability of such event is $\Pr \prnt{\bR_{s}(\bh,\bg) < R_s}$.

For clarity, let us point out a few statistical properties of the ratio in (\ref{eq: sec capacity general}). In the denominator, the squared inner product  $\vert \langle \hat{\bh}, \bg\rangle \vert^2$ follows the Chi-squared distribution with $2$ degrees of freedom, $\chi^2(2)$ (which is equivalent to the Exponential distribution with rate parameter 1/2), since $\hat{\bh}$ is normalized,  rotating both $\hat{\bh}$ and $\bg$ such that $\hat{\bh}$ aligns with the unit vector does not change the inner product. Thus, the inner product result in a complex Gaussian random variable \cite{sharif2005capacity,jagannathan2006efficient}.
Similarly, in the numerator, the squared norm $\Vert \bh  \Vert^2$ follows the Chi-squared distribution with $2t$ degrees of freedom, $\chi^2(2t)$, since it is a sum of $t$ squared complex Gaussian random variables. Thus, for any user $i$ and eavesdropper $j$, the secrecy SNR  when beamforming to user $i$ is equivalent to the ratio of $1+\chi^2(2t)$ and $1+\chi^2(2)$ random variables.

\subsection{Main Tool}
To assess the ratio in the presence of large number of users and eavesdroppers, let us recall that for sufficiently large $n$, the maximum of a sequence of $n$ i.i.d.\ $\chi^2 (v)$ variables,  $\bM_n = \max \prnt{\bxi_1,...,\bxi_n}$ follows the Gumbel distribution \cite[pp. 156]{embrechts2011modelling}. Specifically,
%\begin{equation}
$\lim_{n \to \infty}\Pr \prnt{\bM_n \leq a_n \xi + b_n} = \exp\brcs{-e^{-\xi}}$,
%\label{eq: Gumbel distribution}
%\end{equation}
where $a_n$ and $b_n$ are normalizing constants. In this case, %\SEFI{NOTE THAT THIS IS ALSO A FUNCTION OF $v$}
\begin{align}
&a_{n} =   2, &\label{eq: a_n}\\
&b_{n} = 2\prnt{\log n + \prnt{\frac{v}{2}-1}\log \log n - \log \Gamma\brkt{\frac{v}{2}}}&\label{eq: b_n}
%\label{eq: normalizing constants}
\end{align}
and $\Gamma[\cdot]$ is the Gamma function. %Yet, with these constants, the convergence rate to the maxima decelerates as the degrees of freedom parameter $v$ grows.  More suitable normalizing constants can be found in, e.g., \cite{kampeas2013itw}.

In this paper, we study the asymptotic (in the number of users and eavesdroppers) distribution of the ratio in (\ref{eq: sec capacity general}), and thus derive the secrecy outage probability, \emph{when the transmitter schedules a user with favorable CSI} and beamforms in its direction.

\section{Asymptotic Secrecy Outage }\label{sec. limit distribution}
%%%%%%%%%%%%%%%%%%%%%%%%%%%%%%%%%%%%%%%%%%%%%%%%%%%%%%%%%%%%%%%%%%%
%
%When only statistical knowledge on the wiretap channels is available at the transmitter, then, transmitting to a user at rate greater than zero, a \emph{secrecy outage} event, in which one or more eavesdroppers are able to decode the transmission, can occur. 
In this section, we analyze the secrecy outage limiting distribution. That is, for a given target secrecy rate $R_s$, we analyze the probability that \emph{at least one eavesdropper among $M$ eavesdroppers} will attain information from the transmission. % to a single user whose channel gain is the greatest among $K$ users.
Obviously, when transmitting to a \emph{single user}, and when only statistical knowledge is available on the wiretap channels, beamforming to the user whose channel gain is the greatest among $K$ users is optimal.

Accordingly, let  $i^* = \arg\max_i \Vert \bh_{i}\Vert^2$ be the index of the channel with the largest gain, and let $j^* = \arg\max_j \vert \langle \hat{\bh}_{i^*}, \bg_{j}\rangle \vert^2$ be the index of the wiretap channel whose projection in the direction $\hat{\bh}_{i^*}$ is the largest. % (i.e., the closest eavesdropper to user $i^*$).
Note that when the transmitter beamforms to user $i^*$ in a multiple eavesdroppers environment, it should tailor a code with secrecy rate $R_s$ to protect the message even from the strongest eavesdropper with respect to $i^*$, which is $j^*$. 
Of course, with only statistical information on the eavesdroppers, $j^*$ is unknown to the transmitter.
 %Accordingly, the probability of a secrecy outage when transmitting to user $i^*$ at secrecy rate $R_s$ is
 Accordingly, the probability of a secrecy outage when transmitting to user $i^*$ at secrecy rate $R_s$ is \cite{barros2006secrecy}:
\begin{align}
&\Pr \prnt{\log_2\prnt{\frac{1+ P\Vert \bh_{i^*}\Vert^2}{1+ P \vert \langle \hat{\bh}_{i^*}, \bg_{j^*}\rangle \vert^2}} \leq R_s} & \nonumber\\
&=  \Pr \prnt{\frac{1+ P\Vert \bh_{i^*}\Vert^2}{1+ P \vert \langle \hat{\bh}_{i^*}, \bg_{j^*}\rangle \vert^2} \leq 2^{R_s}}&
\label{eq: sec capacity MRT}
\end{align}
To ease notation, we denote $\alpha = 2^{R_s}$.

In the following, we analyze the distribution in (\ref{eq: sec capacity MRT}) when the number of users and eavesdroppers is large. In particular, we consider the secrecy rate distribution when the transmitter  beamforms to user $i^*$, while all eavesdroppers are striving to intercept the transmission separately (without cooperation).

When the transmitter is beamforming to a user whose channel gain is the greatest, then the squared norm $\Vert\bh_{i^*}\Vert^2$ in the numerator of (\ref{eq: sec capacity MRT}) scales with the number of users like $O(\log K)$ \cite{embrechts2011modelling}. Nevertheless, the greatest channel projection in the direction of the attending user, in the denominator of (\ref{eq: sec capacity MRT}), also scales with the number of eavesdroppers in the order of  $O(\log M)$. Moreover, asymptotically, both the greatest gain and greatest channel projection follow the Gumbel distribution (with different normalizing constants). Thus, in order to determine the secrecy rate behavior, as $K$ and $M$ grow, one needs to address the ratio of Gumbel random variables. However, the ratio distribution of Gumbel random variables is not known to have a closed-form \cite{nadarajah2006ratios}. Thus, we first express it as an infinite sum of Gamma functions, then provide tight bounds on the obtained distribution, from which we can infer the outage probability. Accordingly, we have the following.

%%%% Propsition%%%%%%%%%%%%%%%%%%%%%%%%%%%%%%%%%%%%%%%%%%%
\begin{theorem}\label{theorem: mu-me exact outage}
For large enough $K$ and $M$, the distribution of the secrecy rate in (\ref{eq: sec capacity MRT})  is the following.
\begin{multline*}
\Pr\prnt{\frac{1+ P\Vert \bh_{i^*}\Vert^2}{1+ P \vert \langle \hat{\bh}_{i^*}, \bg_{j^*}\rangle \vert^2} \leq \alpha}\\
 =  \sum_{k=0}^\infty \frac{(-1)^k e^{-(k+1)\frac{1+b_K - \alpha(1+b_M)}{\alpha a_M}}}{(k+1)!} \Gamma\left[1+\frac{(k+1)a_K}{\alpha a_M}\right]
%\label{eq:}
\end{multline*}
where $a_K$, $a_M$ and $b_K$, $b_M$ are normalizing constants given in (\ref{eq: a_n}) and (\ref{eq: b_n}), respectively.
\end{theorem}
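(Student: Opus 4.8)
The plan is to recast the event in (\ref{eq: sec capacity MRT}) in terms of two \emph{independent} Gumbel random variables and then evaluate the resulting one‑dimensional integral by a power‑series expansion. Write $\bu := 1+P\Vert\bh_{i^*}\Vert^2$ for the numerator and $\bv := 1+P\vert\langle\hat\bh_{i^*},\bg_{j^*}\rangle\vert^2$ for the denominator, so the quantity of interest is $\Pr(\bu/\bv\le\alpha)=\Pr(\bu\le\alpha\bv)$ (I take $P=1$, as in the statement; a general $P$ only changes the constants). First I would establish that $\bu$ and $\bv$ are independent. For a complex Gaussian vector the squared norm and the direction are independent, and the pairs $(\Vert\bh_i\Vert^2,\hat\bh_i)$ are i.i.d.\ over $i$; since $i^*=\arg\max_i\Vert\bh_i\Vert^2$ depends on the norms alone, $\hat\bh_{i^*}$ is uniform on the unit sphere and independent of $\Vert\bh_{i^*}\Vert^2=\max_i\Vert\bh_i\Vert^2$. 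Conditioned on $\hat\bh_{i^*}$, the projections $\vert\langle\hat\bh_{i^*},\bg_j\rangle\vert^2$, $j=1,\dots,M$, are i.i.d.\ $\chi^2(2)$ by the rotational invariance already used for (\ref{eq: sec capacity general}), and this conditional law does not depend on $\hat\bh_{i^*}$; hence $\bv$ is independent of $\bu$.

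Next, by the extreme‑value statement recalled in the Main Tool subsection, for large $K$ the maximum of $K$ i.i.d.\ $\chi^2(2t)$ variables is Gumbel with scale $a_K$ and location $b_K$, and for large $M$ the maximum of $M$ i.i.d.\ $\chi^2(2)$ variables is Gumbel with scale $a_M$ and location $b_M$. Thus $\bu$ and $\bv$ are (asymptotically) independent Gumbel variables with scales $a_K,a_M$ and locations $1+b_K,1+b_M$. Writing $F_{\bv}$ for the Gumbel CDF and $f_{\bu}$ for the Gumbel density and conditioning on $\bu$,
\begin{align*}
\Pr(\bu\le\alpha\bv) &=\int_{-\infty}^{\infty}\bigl(1-F_{\bv}(x/\alpha)\bigr)f_{\bu}(x)\,dx\\
&=1-\int_{-\infty}^{\infty}F_{\bv}(x/\alpha)\,f_{\bu}(x)\,dx .
\end{align*}

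Then I would substitute $v=\exp\!\prnt{-(x-(1+b_K))/a_K}$, which maps $x\in\R$ to $v\in(0,\infty)$, turns $f_{\bu}(x)\,dx$ into $e^{-v}\,dv$, and turns $F_{\bv}(x/\alpha)$ into $\exp\!\prnt{-e^{-d}v^{q}}$ with $d=\frac{(1+b_K)-\alpha(1+b_M)}{\alpha a_M}$ and $q=\frac{a_K}{\alpha a_M}$. Expanding $\exp(-e^{-d}v^{q})=\sum_{k\ge0}\frac{(-1)^k e^{-dk}}{k!}v^{qk}$, interchanging sum and integral (justified by Tonelli on the series of absolute values), and using $\int_0^\infty v^{qk}e^{-v}\,dv=\Gamma(qk+1)$ gives
\[
\Pr(\bu\le\alpha\bv)=1-\sum_{k=0}^{\infty}\frac{(-1)^k e^{-dk}}{k!}\,\Gamma(qk+1).
\]
The $k=0$ term equals $1$; cancelling it, reindexing $k\mapsto k+1$, and substituting back $d$ and $q$ yields exactly the claimed expression.

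The delicate point is less any single calculation than the choice that keeps everything convergent: one must expand the \emph{eavesdropper}‑side CDF $F_{\bv}$, because then $q=a_K/(\alpha a_M)=1/\alpha\le 1$ (recall $a_K=a_M=2$ from (\ref{eq: a_n})), so $\Gamma(qk+1)/k!\to0$ and the series converges for every target rate $R_s>0$; expanding the user‑side CDF instead produces terms $\Gamma(\alpha k+1)/k!$ and a merely asymptotic, divergent series, so the order of conditioning in the reduction to an integral matters. The independence claim, although standard, also has to be stated with some care because the index $i^*$ is itself data‑dependent.
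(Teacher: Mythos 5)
Your proposal is correct and follows essentially the same route as the paper: treat the two extremes as independent Gumbel variables, reduce the outage probability to a one--dimensional integral, substitute to put it in the form $\int e^{-v}\exp\{-c\,v^{q}\}\,\mathrm{d}v$, expand one exponential as a power series, and integrate term by term to get the Gamma-function series. The only (immaterial) difference is that the paper conditions on the eavesdropper's value and expands the $e^{-\zeta}$ factor of the eavesdropper's Gumbel density, obtaining incomplete Gamma functions that become complete upon marginalizing, whereas you condition on the user's value and expand the eavesdropper's Gumbel CDF directly; your explicit independence argument and the convergence remark for $q=1/\alpha\le 1$ are welcome additions that the paper leaves implicit.
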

Note that $b_K$ and $b_M$ grow at different rate. Specifically, although $b_K$ and $b_M$ are both normalizing constant of the $\chi^2$ distribution, $b_K$ has value of $v=2t$ in (\ref{eq: b_n})), while $b_M$ has value of $v=2$ in (\ref{eq: b_n}).  
%Particularly, $b_K$ contains an additional component in the order $O(\log\log n)$ compared to $b_M$.
The proof is given in the Appendix.

To evaluate the result in Theorem~\ref{theorem: mu-me exact outage}, one needs to evaluate the infinite sum, which is intricate. Thus, the following upper and lower bounds are useful.
%%%%%%%%%%%%%%%%%%%%%%%%%%%%%%%%%%%%
\subsection{Bounds on the Secrecy Rate Distribution}
%%%%%%%%%%%%%%%%%%%%%%%%%%%%%
In the following, %we introduce a novel approach that provides tight bounds on the distribution of the secrecy rate in (\ref{eq: sec capacity MRT}). Mathematically,
 we suggest an approach that models EVT according to its  tail distribution, which enables us to provide tight bounds to the distribution in~(\ref{eq: sec capacity MRT}). This approach has very clear and intuitive \emph{communication interpretation}. Specifically, for an upper bound, we put a threshold on eavesdropper $j^*$'s wiretap channel projection, and analyze the result under the assumption that its projection exceeded. For a lower bound, we put a threshold on user $i^*$'s channel gain, and analyze the result under the assumption that its gain has exceeded it.
%%%%%%%%%%%%%%%%%%%%%%%%%%%%%%%%%%%%%%%%%%%%%%%%%%%%%%%%%%
%\subsubsection{Secrecy Rate Distribution Upper Bound}
%%%%%%%%%%%%%%%%%%%%%%%%%%%%%%%%%%%%%%%%%%%%%%%%%%%%%%%%%%%

When only a single user (eavesdropper), among many, exceeds a threshold on average, then the above-threshold tail distribution corresponds to the tail of extreme value distribution \cite[Ch. 4.2]{coles2001introduction}.
   Moreover, the tail limiting distribution has a mean value that is higher than the mean value of the extreme value distribution, since the tail limiting distribution takes into account only events in which user $i^*$ (eavesdropper $j^*$) is sufficiently strong, namely, above threshold. Thus, replacing the extreme value distribution of user $i^*$ (eavesdropper $j^*$) with its corresponding tail distribution will increase the numerator  (denominator) in (\ref{eq: sec capacity MRT}) on average. Thus, the resulting secrecy rate is  higher (lower), hence, corresponds to a lower (upper) bound on the ratio CDF.

Let $u_m$ denote a threshold on the wiretap channel projection in the direction $\hat{\bh}_{i^*}$, such that a single (the strongest) eavesdropper exceeds it on average. Note that such a threshold can be obtained by inversing the complement CDF of the Exponential distribution. Further, note that this inverse is exactly (\ref{eq: b_n}) with $v=2$ degrees of freedom.  Accordingly, we have the following lower bound.
%Further, this bounds will enable us to derive the scaling law of the suggested scheme.
%When the transmitter holds only partial information on the eavesdropper. In this section we analyse the secrecy rate,assuming that the transmitter only knows whether some of the eavesdroppers have exceeded threshold, i.e., $\vert \mathcal{S} \vert >0$.
%Accordingly, we have the following.
%%%%%%%%%%%%%%%%%%%%%%%%%%%%%%%%%%%%%%%%%%%%%%%%%%%%%%%%%%%%
\begin{lemma}\label{lemma: some eve above thr}
For sufficiently large $K$ and $M$, the CDF of the secrecy rate  in (\ref{eq: sec capacity MRT}) satisfies the following upper bound.
\begin{align*}
&\Pr \prnt{\frac{1+ P\Vert \bh_{i^*}\Vert^2}{1+ P\vert \langle \hat{\bh}_{i^*}, \bg_{j^*}\rangle \vert^2} \leq \alpha} &\\
& \leq \frac{a_K}{\alpha a_M} e^{-\frac{1+ b_K- \alpha(1+ u_m) }{\alpha a_M}} \Gamma \brkt{\frac{a_K}{\alpha a_M},0, e^{\frac{1+b_K - \alpha(1+ u_m ) }{a_K}}},&
%\label{eq:}
\end{align*}
where $\Gamma\brkt{s,0,z} = \int_0^z \tau^s e^{-\tau} \mathrm{d}\tau$ is the lower incomplete Gamma function.
\end{lemma}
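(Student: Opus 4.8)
\emph{Proof plan.} The strategy is to replace each of the two extreme-value quantities in~(\ref{eq: sec capacity MRT}) by a tractable surrogate, the one in the denominator being a genuine stochastic bound. First observe that $\Vert\bh_{i^*}\Vert^2$ and $\vert\langle\hat{\bh}_{i^*},\bg_{j^*}\rangle\vert^2$ are independent: the index $i^*$ and the unit vector $\hat{\bh}_{i^*}$ depend only on $\{\bh_i\}$, and conditioned on $\hat{\bh}_{i^*}$ the inner products $\langle\hat{\bh}_{i^*},\bg_j\rangle$ are i.i.d.\ complex Gaussian by isotropy of the $\bg_j$, so $\vert\langle\hat{\bh}_{i^*},\bg_{j^*}\rangle\vert^2=\max_j\vert\langle\hat{\bh}_{i^*},\bg_j\rangle\vert^2$ is the maximum of $M$ i.i.d.\ $\chi^2(2)$ (Exponential rate $1/2$) variables whose law does not depend on $\{\bh_i\}$. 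For the numerator I would use the Main Tool: $\Vert\bh_{i^*}\Vert^2$ is the maximum of $K$ i.i.d.\ $\chi^2(2t)$ variables, hence for large $K$ is distributed as $a_K\mathbf{X}+b_K$ with $\mathbf{X}$ standard Gumbel. For the denominator I would \emph{not} use the Gumbel limit but the peaks-over-threshold surrogate at the level $u_m$ solving $Me^{-u_m/2}=1$ (the value in~(\ref{eq: b_n}) with $v=2$): namely $u_m+a_M\mathbf{W}$, $\mathbf{W}$ standard Exponential, which by memorylessness is exactly the conditional law of an Exponential excess above $u_m$ (using $a_M=2$).

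The crux is that this surrogate stochastically dominates the true maximum $\bM_M$. For $x\ge u_m$, Bernoulli's inequality gives
\[
\Pr(\bM_M\le x)=\prnt{1-e^{-x/2}}^M\ge 1-Me^{-x/2}=1-e^{-(x-u_m)/2}=\Pr\prnt{u_m+a_M\mathbf{W}\le x},
\]
and for $x<u_m$ the right-hand side is nonpositive, so the inequality is trivial; thus $\bM_M\preceq u_m+a_M\mathbf{W}$ in the stochastic order. Coupling so that $\vert\langle\hat{\bh}_{i^*},\bg_{j^*}\rangle\vert^2\le u_m+a_M\mathbf{W}$ almost surely while keeping this independent of $\bh_{i^*}$, the denominator of the ratio only increases and the ratio only decreases, so
\[
\Pr\prnt{\frac{1+P\Vert\bh_{i^*}\Vert^2}{1+P\vert\langle\hat{\bh}_{i^*},\bg_{j^*}\rangle\vert^2}\le\alpha}\le\Pr\prnt{\frac{1+P(a_K\mathbf{X}+b_K)}{1+P(u_m+a_M\mathbf{W})}\le\alpha}.
\]
This makes rigorous the heuristic stated before the lemma that replacing the eavesdropper's extreme-value law by its above-threshold tail increases the denominator ``on average'': here it does so in the stronger stochastic sense, and it is the exact choice $Me^{-u_m/2}=1$ that lets Bernoulli's inequality close.

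It remains to evaluate the right-hand side (I take $P=1$, as in the statement; the general case is identical). Clearing the fraction and solving for $\mathbf{X}$, the event becomes $\mathbf{X}\le c+\frac{\alpha a_M}{a_K}\mathbf{W}$ with $c:=\frac{\alpha(1+u_m)-1-b_K}{a_K}$; conditioning on $\mathbf{W}=w$, independence together with the Gumbel CDF $\Pr(\mathbf{X}\le x)=e^{-e^{-x}}$ turns the probability into
\[
\int_0^\infty e^{-w}\,\exp\!\prnt{-e^{-c}\,e^{-\frac{\alpha a_M}{a_K}w}}\,\mathrm{d}w .
\]
The substitution $s=e^{-c}e^{-\frac{\alpha a_M}{a_K}w}$ maps this integral to $\frac{a_K}{\alpha a_M}\,e^{\frac{a_K c}{\alpha a_M}}\int_0^{e^{-c}}s^{\frac{a_K}{\alpha a_M}-1}e^{-s}\,\mathrm{d}s$, a lower incomplete Gamma function; since $e^{-c}=e^{\frac{1+b_K-\alpha(1+u_m)}{a_K}}$ and $e^{\frac{a_K c}{\alpha a_M}}=e^{-\frac{1+b_K-\alpha(1+u_m)}{\alpha a_M}}$, this is exactly the expression in the statement.

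I expect the only genuinely delicate step to be the threshold replacement — upgrading the EVT ``above-threshold'' picture to the honest stochastic bound above, which rests on the coupling together with Bernoulli's inequality and on $u_m$ being the \emph{exact} solution of $Me^{-u_m/2}=1$. The remaining ingredients — the independence of $\bh_{i^*}$ and $\bg_{j^*}$, the conditioning on $\mathbf{W}$, and the change of variables into an incomplete Gamma function — are routine. As with Theorem~\ref{theorem: mu-me exact outage}, the stated inequality is to be read in the EVT-limit sense, ``for sufficiently large $K$''.
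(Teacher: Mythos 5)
Your proposal is correct and reaches the paper's bound by essentially the same computation: replace the eavesdroppers' maximum by the peaks-over-threshold surrogate $u_m+a_M\mathbf{W}$ with $\mathbf{W}$ exponential, keep the Gumbel law $a_K\mathbf{X}+b_K$ for the scheduled user, condition on the excess, and reduce the resulting integral by the substitution $s=e^{-c}e^{-\frac{\alpha a_M}{a_K}w}$ to the stated lower incomplete Gamma expression. (The paper organizes this as a joint law $\Pr(\balpha>\alpha,\bbeta>\beta)$ followed by setting $\beta=0$, which is the same integral; also note the paper's displayed definition $\Gamma[s,0,z]=\int_0^z\tau^{s}e^{-\tau}\,\mathrm{d}\tau$ has exponent $\tau^{s}$ where both your derivation and the paper's own appendix actually produce $\tau^{s-1}$ --- a typo in the paper, not in your algebra.)

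Where you genuinely depart from, and improve on, the paper is the justification of the inequality's direction. The paper's appendix only \emph{computes} the distribution of the surrogate ratio; the claim that this upper-bounds the true CDF rests on the informal ``replacing the extreme-value law by its above-threshold tail increases the denominator on average'' discussion preceding the lemma, together with the asymptotic exponential-excess statement from extreme value theory. Your argument replaces this with an honest finite-$M$ stochastic dominance: Bernoulli's inequality gives $(1-e^{-x/2})^M\ge 1-Me^{-x/2}=\Pr(u_m+a_M\mathbf{W}\le x)$ once $u_m$ solves $Me^{-u_m/2}=1$, and the coupling with an independent numerator then yields the CDF inequality exactly rather than ``on average.'' This closes the weakest step of the paper's proof; the only remaining asymptotic ingredient in your version (as in the paper's) is the Gumbel approximation for the numerator, which is correctly flagged as the ``sufficiently large $K$'' caveat.
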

%%%%%%%%%%%%%%%%%%%%%%%%%%%%%%%%%%%%%%%%%%%%%%%%%%%%%%%%%%%%%%%%%
The proof is given in the Appendix. 
%%%%%%%%%%%%%%%%%%%%%%%%%%%%%%%%%%%%%%%%%%%%%%%%%%%%%%%%%%%%%%%%%%
The following corollary helps gaining insights from Lemma~\ref{lemma: some eve above thr}.
\begin{corollary}\label{coro: meaningful upper bound}
For $\alpha \geq 1$, the outage probability satisfies the following bound.
\begin{multline*}
\Pr \prnt{\frac{1+ P\Vert \bh_{i^*}\Vert^2}{1+ P\vert \langle \hat{\bh}_{i^*}, \bg_{j^*}\rangle \vert^2} \leq \alpha} \\
< \brkt{\Lambda(\alpha)\prnt{1-\exp\brcs{-\Lambda(\alpha)^{-1} 2^{\alpha -1}}}}^{1/\alpha},
%\label{eq:}
\end{multline*}
where $\Lambda(\alpha) =  \prnt{\sqrt{e}M}^\alpha\frac{ \Gamma(t)}{\sqrt{e}K \prnt{\log K}^{t-1}}$.
\end{corollary}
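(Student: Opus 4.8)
The plan is to feed the explicit normalizing constants into Lemma~\ref{lemma: some eve above thr}, observe that the upper limit of the lower incomplete Gamma integral there is exactly $\Lambda(\alpha)^{-1}$, and then bound the resulting incomplete Gamma quantity by a power of $1-e^{-\cdot}$. Concretely, I would substitute $a_K=a_M=2$ from (\ref{eq: a_n}), $b_K=2\prnt{\log K+(t-1)\log\log K-\log\Gamma[t]}$ (equation~(\ref{eq: b_n}) with $v=2t$), and $u_m=b_M=2\log M$ (the inverse exponential tail with $v=2$, as described just before the lemma) into the bound of Lemma~\ref{lemma: some eve above thr}. Then the prefactor $a_K/(\alpha a_M)$ collapses to $1/\alpha$, and a one-line computation shows that the exponential prefactor equals $\Lambda(\alpha)^{1/\alpha}$ while the integral's upper limit equals $\Lambda(\alpha)^{-1}$, so Lemma~\ref{lemma: some eve above thr} becomes
\[
\Pr \prnt{\frac{1+ P\Vert \bh_{i^*}\Vert^2}{1+ P\vert \langle \hat{\bh}_{i^*}, \bg_{j^*}\rangle \vert^2} \leq \alpha}\ \le\ \Lambda(\alpha)^{1/\alpha}\,\frac{1}{\alpha}\int_0^{1/\Lambda(\alpha)}\tau^{1/\alpha}e^{-\tau}\,\mathrm{d}\tau ,
\]
and the corollary reduces to proving the scalar inequality
\[
\frac{1}{\alpha}\int_0^{z}\tau^{1/\alpha}e^{-\tau}\,\mathrm{d}\tau\ <\ \prnt{1-e^{-2^{\alpha-1}z}}^{1/\alpha},\qquad z>0,\ \alpha\ge1,
\]
and then specializing it to $z=\Lambda(\alpha)^{-1}$.

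For the scalar inequality I would write $\int_0^z\tau^{1/\alpha}e^{-\tau}\,\mathrm{d}\tau=\gamma\prnt{1+\tfrac1\alpha,z}$ (lower incomplete Gamma) and invoke the classical estimate $\gamma(s,z)<\Gamma(s)\prnt{1-e^{-z}}^{s}$, valid for every $s>1$, $z>0$; this has a short self-contained proof, since $D(z):=\Gamma(s)(1-e^{-z})^s-\gamma(s,z)$ satisfies $D(0)=\lim_{z\to\infty}D(z)=0$ while $D'(z)=e^{-z}z^{s-1}\brkt{\Gamma(s+1)\prnt{(1-e^{-z})/z}^{s-1}-1}$ is positive then negative (because $(1-e^{-z})/z$ strictly decreases from $1$ to $0$ and $\Gamma(s+1)>1$), forcing $D>0$ on $(0,\infty)$. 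Applying this with $s=1+\tfrac1\alpha>1$ gives $\tfrac1\alpha\gamma\prnt{1+\tfrac1\alpha,z}<\tfrac{\Gamma(1+1/\alpha)}{\alpha}\prnt{1-e^{-z}}^{1+1/\alpha}$, and I would finish by chaining four elementary facts: $\Gamma\prnt{1+\tfrac1\alpha}\le1$ (since $1+\tfrac1\alpha\in(1,2]$ and $\Gamma\le1$ on $[1,2]$), $\prnt{1-e^{-z}}^{1+1/\alpha}\le\prnt{1-e^{-z}}^{1/\alpha}$, $1-e^{-z}\le1-e^{-2^{\alpha-1}z}$ (as $2^{\alpha-1}\ge1$ when $\alpha\ge1$), and $\tfrac1\alpha\le1$. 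This gives $\tfrac1\alpha\gamma\prnt{1+\tfrac1\alpha,z}<\prnt{1-e^{-2^{\alpha-1}z}}^{1/\alpha}$; taking $z=\Lambda(\alpha)^{-1}$ and combining with the two displays above yields $\Pr(\cdot)<\brkt{\Lambda(\alpha)\prnt{1-\exp\brcs{-\Lambda(\alpha)^{-1}2^{\alpha-1}}}}^{1/\alpha}$, which is the claim.

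The main obstacle is the scalar inequality, and within it the incomplete Gamma estimate $\gamma(s,z)<\Gamma(s)(1-e^{-z})^{s}$ for $s>1$; everything else is bookkeeping with (\ref{eq: a_n})--(\ref{eq: b_n}) and monotonicity of elementary functions. If one prefers not to cite that estimate, the same ``value $0$ at $0$, nonnegative limit at $\infty$, single sign change of the derivative'' template applies directly to $G(z):=\prnt{1-e^{-2^{\alpha-1}z}}^{1/\alpha}-\tfrac1\alpha\int_0^z\tau^{1/\alpha}e^{-\tau}\,\mathrm{d}\tau$: here $G(0)=0$, $\lim_{z\to\infty}G(z)=1-\Gamma(1/\alpha)/\alpha^2\ge1-1/\alpha\ge0$ (using $\Gamma(1/\alpha)=\alpha\,\Gamma(1+\tfrac1\alpha)\le\alpha$), and $\alpha G'(z)=2^{\alpha-1}e^{-2^{\alpha-1}z}(1-e^{-2^{\alpha-1}z})^{(1-\alpha)/\alpha}-z^{1/\alpha}e^{-z}$ changes sign once because the ratio of its two positive terms strictly decreases from $+\infty$ to $0$.
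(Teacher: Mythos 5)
Your overall strategy is the same as the paper's: substitute the constants from (\ref{eq: a_n})--(\ref{eq: b_n}) into Lemma~\ref{lemma: some eve above thr}, recognize the exponential prefactor as $\Lambda(\alpha)^{1/\alpha}$ and the integration limit as $\Lambda(\alpha)^{-1}$ (both of these computations check out), and then control the lower incomplete Gamma term by an Alzer-type bound of the form $\Gamma(s)\prnt{1-e^{-cz}}^{s}$ together with $2^{s-1}\le\Gamma(1+s)\le1$. However, there is a concrete gap in the transcription of the incomplete Gamma term. The integral that Lemma~\ref{lemma: some eve above thr} actually produces (see its proof in the Appendix, where the change of variables yields $\int_0^{\cdot}\zeta^{\frac{a_K}{\alpha a_M}-1}e^{-c\zeta}\,\mathrm{d}\zeta$) is the \emph{standard} lower incomplete Gamma function $\gamma(s,z)=\int_0^z\tau^{s-1}e^{-\tau}\,\mathrm{d}\tau$ with $s=\tfrac{a_K}{\alpha a_M}=\tfrac1\alpha\in(0,1]$; this is also the only reading under which Claim~\ref{claim: gamma function}$(i)$ typechecks (it requires $\Gamma[s,0,\infty]=\Gamma[s]$). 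You instead took the formula printed in the lemma statement, $\int_0^z\tau^{s}e^{-\tau}\,\mathrm{d}\tau$ (a typo in the paper), so your first display carries the integrand $\tau^{1/\alpha}$ rather than $\tau^{1/\alpha-1}$, i.e.\ $\gamma\prnt{1+\tfrac1\alpha,z}$ rather than $\gamma\prnt{\tfrac1\alpha,z}$. Since $\gamma\prnt{1+\tfrac1\alpha,z}=\tfrac1\alpha\gamma\prnt{\tfrac1\alpha,z}-z^{1/\alpha}e^{-z}<\tfrac1\alpha\gamma\prnt{\tfrac1\alpha,z}$, your right-hand side is \emph{strictly smaller} than the bound Lemma~\ref{lemma: some eve above thr} actually provides, so upper-bounding it does not upper-bound the outage probability, and the chain does not close.

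The repair is not cosmetic, because it changes which branch of the incomplete-Gamma inequality is needed. With the correct parameter $s=\tfrac1\alpha\in(0,1]$, the naive bound $\gamma(s,z)<\Gamma(s)(1-e^{-z})^{s}$ \emph{reverses} (for $0<s<1$ one has $\gamma(s,z)>\Gamma(s)(1-e^{-z})^{s}$), so your self-contained single-sign-change argument -- which is correct, but proves the $s>1$ branch -- does not apply. What is required is the refined upper bound $\gamma(s,z)<\Gamma(s)\prnt{1-e^{-z\Gamma(1+s)^{-1/s}}}^{s}$ valid for $0<s<1$, i.e.\ exactly Claim~\ref{claim: gamma function}$(i)$ as the paper uses it; the factor $2^{\alpha-1}$ in the final statement then arises from $\Gamma\brkt{1+\tfrac1\alpha}^{-\alpha}\le2^{\alpha-1}$ via Claim~\ref{claim: gamma function}$(ii)$, together with $\tfrac1\alpha\Gamma\brkt{\tfrac1\alpha}=\Gamma\brkt{1+\tfrac1\alpha}\le1$ to absorb the prefactor. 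In your version the $2^{\alpha-1}$ is produced by the unrelated (and much weaker) step $1-e^{-z}\le1-e^{-2^{\alpha-1}z}$, which is a sign that the quantity being bounded is not the intended one. If you restate your first display with integrand $\tau^{1/\alpha-1}$ and replace your scalar inequality by the $0<s<1$ branch just described, the rest of your bookkeeping goes through and coincides with the paper's proof.
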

Note that the value of $\Lambda(\alpha)$ determines the outage probability. In particular, at the limit $\Lambda(\alpha) \to \infty$, the resulting outage probability is $1$ (i.e., when $M \to \infty$ and $K$ is fixed). Similarly, when $\Lambda(\alpha) \to 0$, the resulting outage probability $0$. Moreover, we point out that $\Lambda(\alpha)$ is decreasing with the number of users as $K \prnt{\log K}^{t-1}$, while increasing with the number of eavesdroppers as $M^\alpha$. %Surprisingly, more antennas may be harmful as it shrivels the maxima growing rate, and hence, the decreasing rate of $\Lambda(\alpha)$. Note that when $t$ is large, the strongest user is not superior to an arbitrary user, on average. This is due to  the hardening effect.  
Thus, roughly speaking, as long as the number of eavesdroppers $M = o\prnt{K \prnt{\log K}^{t-1}}^{1/\alpha}$, we obtain $\Lambda(\alpha) = o(1)$, hence, secrecy outage in the order of $o(1)$. %\SEFI{fix please..	$\Theta\prnt{\frac{M^{\alpha/(t-1)} \Gamma[t]}{\log\prnt{M^{\alpha/(t-1)}}^{t-1}}}$} \SEFI{the solution of $K^{1/(t-1)} \log K = M^{\alpha/(t-1)}$ and using Taylor's expansion around $\infty$.} 

To prove Corollary~\ref{coro: meaningful upper bound}, the following Claim is useful. 
%%%%%%%%%%%%%%%%%%%%%%%%%%%%%%%%%%%%%LEMMA
\begin{claim}[\cite{gautschi1998incomplete,laforgia2013some,mortici2010new}]\label{claim: gamma function}
The incomplete Gamma function satisfies the following bounds.
\begin{description}
	\item[$(i)$] $ \Gamma\brkt{s} \prnt{1-e^{-z}}^{s} < \Gamma\brkt{s,0,z}<  \Gamma\brkt{s} \prnt{1-e^{-z \Gamma\brkt{1 + s}^{-1/s} }}^{s}, \forall\quad 0 < s < 1$. 
	This inequality takes the opposite direction for values of $s>1$.
	\item[$(ii)$] $2^{s-1}\leq \Gamma\brkt{1+s} \leq 1, \forall\quad 0 < s < 1$.
	\item[$(iii)$] $\Gamma\brkt{s}\Gamma\brkt{1/s} \geq 1, \forall s>0$.
%	\item[$(iv)$] $\Gamma\brkt{1+s} < \sqrt{2 \pi}s^{s+1}e^{-s}, \forall s\geq 1$.
	%
\end{description}
\end{claim}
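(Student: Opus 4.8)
The statement bundles three essentially independent inequalities, so I would establish them one at a time; of the three, only $(i)$ is substantive — it is a Gautschi--Kershaw--Alzer type two-sided estimate for the regularized lower incomplete Gamma function — while $(ii)$ and $(iii)$ are elementary facts about $\Gamma$ on $\brkt{1,2}$ that follow from the log-convexity of $\Gamma$ and the monotonicity of the digamma function $\psi=\prnt{\log\Gamma}'$. Since the Claim is quoted from \cite{gautschi1998incomplete,laforgia2013some,mortici2010new}, the "proof" is really a matter of pointing to the right argument in each case, which I sketch here.

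For $(i)$, write $P\prnt{s,z}=\Gamma\brkt{s,0,z}/\Gamma\brkt{s}$ for the regularized lower incomplete Gamma function and put $\rho(z)=-\tfrac1z\log\prnt{1-P\prnt{s,z}^{1/s}}$, so that $P\prnt{s,z}^{1/s}=1-e^{-z\rho(z)}$; the asserted bounds are exactly the statement that $\min\brcs{1,\Gamma\brkt{1+s}^{-1/s}}\le\rho(z)\le\max\brcs{1,\Gamma\brkt{1+s}^{-1/s}}$ for all $z>0$, after multiplying through by $\Gamma\brkt{s}$. The two endpoint values come from the local behaviour of $P$: the expansion of $\Gamma\brkt{s,0,z}$ near $z=0$ gives $\rho(0^+)=\Gamma\brkt{1+s}^{-1/s}$, while the exponentially small complement $\Gamma\brkt{s}-\Gamma\brkt{s,0,z}$ near $z=\infty$ gives $\rho(\infty)=1$. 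Now $\Gamma\brkt{1+s}<1$ precisely when $0<s<1$ (and $\Gamma\brkt{1+s}>1$ when $s>1$), so the ordering of the two endpoints — hence the direction of the inequality and its reversal at $s=1$ — is already fixed. What remains, and this is the main obstacle, is to show that $\rho$ is \emph{monotone} on $\prnt{0,\infty}$ — equivalently a convexity-type property of the auxiliary function $z\mapsto-\log\prnt{1-P\prnt{s,z}^{1/s}}$, which vanishes at $z=0$. This is the delicate step: one differentiates $\rho$ and reduces the sign condition to a Turán-type inequality relating $\Gamma\brkt{s,0,z}$ to its $z$-derivative $z^{s-1}e^{-z}/\Gamma\brkt{s}$; I would quote the monotonicity from \cite{gautschi1998incomplete} (in the sharp two-sided form, from Alzer's refinement) and combine it with the two limits to finish.

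For $(ii)$, the upper bound $\Gamma\brkt{1+s}\le1$ on $0\le s\le1$ is immediate: on $\brkt{1,2}$ the convex function $\log\Gamma$ lies below the chord through $\prnt{1,\log\Gamma(1)}=\prnt{1,0}$ and $\prnt{2,\log\Gamma(2)}=\prnt{2,0}$, which is identically $0$. For the lower bound put $f(s)=\log\Gamma\brkt{1+s}-\prnt{s-1}\log2$; then $f(1)=0$ and $f'(s)=\psi\prnt{1+s}-\log2$, which is increasing in $s$ with $f'(1)=\prnt{1-\gamma}-\log2<0$ ($\gamma$ the Euler--Mascheroni constant), so $f'\le f'(1)<0$ on $\prnt{0,1}$; hence $f$ is decreasing and $f(s)\ge f(1)=0$, i.e.\ $\Gamma\brkt{1+s}\ge2^{s-1}$. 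Finally $(iii)$ is the classical inequality $\Gamma\brkt{s}\Gamma\brkt{1/s}\ge\Gamma(1)^2=1$, which again follows from the log-convexity of $\Gamma$ together with $\Gamma(1)=1$; here I would simply cite \cite{laforgia2013some,mortici2010new}. In summary, I expect all the real work to sit in the monotonicity step of $(i)$ — the regime actually needed downstream being $s=a_K/(\alpha a_M)=1/\alpha<1$ — while $(ii)$, $(iii)$, and the assembly are routine.
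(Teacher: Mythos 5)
The paper gives no proof of this Claim at all: it is quoted verbatim from \cite{gautschi1998incomplete,laforgia2013some,mortici2010new}, so your proposal is strictly more than what the paper provides, and there is no ``paper proof'' to match against. Your treatment of $(ii)$ is complete and correct: the upper bound is the chord argument for the convex $\log\Gamma$ between the zeros at $1$ and $2$, and for the lower bound the function $f(s)=\log\Gamma\brkt{1+s}-\prnt{s-1}\log 2$ indeed satisfies $f(1)=0$ and $f'(s)=\psi\prnt{1+s}-\log 2\le f'(1)=1-\gamma-\log 2<0$ on $\prnt{0,1}$, hence $f\ge 0$ there. Your skeleton for $(i)$ is the standard Gautschi--Alzer route, and your two endpoint computations are right: $\gamma\prnt{s,z}\sim z^{s}/s$ near $0$ gives $\rho(0^{+})=\Gamma\brkt{1+s}^{-1/s}$, and $\Gamma\brkt{s}-\gamma\prnt{s,z}\sim z^{s-1}e^{-z}$ gives $\rho(\infty)=1$; deferring the monotonicity of $\rho$ to the cited references is appropriate for a quoted result, and you correctly note that only the regime $s=1/\alpha<1$ is used downstream (in fact, part $(iii)$ is never invoked in the paper's subsequent derivations). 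The one soft spot is your remark that $(iii)$ ``follows from log-convexity together with $\Gamma(1)=1$'': the natural convex combination $1=\tfrac{1}{s+1}\,s+\tfrac{s}{s+1}\,\tfrac{1}{s}$ only yields $\Gamma\brkt{s}\Gamma\brkt{1/s}^{s}\ge 1$, not the symmetric product, so a genuinely different argument (e.g.\ showing $x\mapsto\log\Gamma\brkt{x}+\log\Gamma\brkt{1/x}$ is minimized at $x=1$) is needed; since you cite \cite{laforgia2013some,mortici2010new} for it anyway, this does not invalidate the proposal, but the parenthetical justification as stated would not go through.
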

\begin{proof}{(Corollary~\ref{coro: meaningful upper bound}).}
Applying the normalizing constants in (\ref{eq: a_n})-(\ref{eq: b_n}) to Lemma~\ref{lemma: some eve above thr} result in
\begin{align*}
&\Pr \prnt{\frac{1+ P\Vert \bh_{i^*}\Vert^2}{1+ P\vert \langle \hat{\bh}_{i^*}, \bg_{j^*}\rangle \vert^2} \leq \alpha} &\\
& \leq \frac{1}{\alpha}\prnt{\frac{\prnt{\sqrt{e}M}^\alpha \Gamma[t] }{\sqrt{e}K \prnt{\log K}^{t-1}}}^{\frac{1}{\alpha}} \Gamma \brkt{\frac{1}{\alpha},0, \frac{\sqrt{e}K \prnt{\log K}^{t-1}}{\prnt{\sqrt{e}M}^\alpha \Gamma[t] } }&
%\label{eq:}
\end{align*}
To ease notation, let us denote 
$\Lambda(\alpha) =  \frac{\prnt{\sqrt{e}M}^\alpha \Gamma(t)}{\sqrt{e}K \prnt{\log K}^{t-1}}$.
%or simply $\Lambda(\alpha)$ when it is clear from the context. 
 Thus, we rewrite Lemma~\ref{lemma: some eve above thr} as
\begin{align*}
&\Pr \prnt{\frac{1+ P\Vert \bh_{i^*}\Vert^2}{1+ P\vert \langle \hat{\bh}_{i^*}, \bg_{j^*}\rangle \vert^2} \leq \alpha} &\\
 &\leq \frac{1}{\alpha}\Lambda(\alpha)^{1/\alpha} \Gamma \brkt{\frac{1}{\alpha},0, \Lambda(\alpha)^{-1}}&\\
 &\stackrel{(a)}{<} \frac{1}{\alpha}\Lambda(\alpha)^{1/\alpha} \Gamma\brkt{\frac{1}{\alpha}}\prnt{1-e^{-\frac{\Gamma\brkt{1 + \frac{1}{\alpha}}^{-\alpha}}{\Lambda(\alpha)}}}^{1/ \alpha}&\\
& \stackrel{(b)}{\leq} \brkt{\Lambda(\alpha)\prnt{1-e^{-\frac{2^{\alpha -1}}{\Lambda(\alpha)}}}}^{1/ \alpha}&
%\label{eq:}
\end{align*}
Remember that only $\alpha \geq 1$ implies secrecy rate greater than zero. Thus, $(a)$ follows from Claim~\ref{claim: gamma function}$(i)$, and $(b)$ follows from Claim~\ref{claim: gamma function}$(ii)$ and from the Gamma function recurrence property, $\Gamma\brkt{\frac{1}{\alpha}} = \alpha\Gamma\brkt{1+ \frac{1}{\alpha}}$.

%$$\Gamma\brkt{s,0,z}<  \Gamma\brkt{s} \prnt{1-e^{-\Gamma\brkt{1 + s}^{-1/s}}}^{s}, \forall 0 < s < 1.$$
%Moreover, this inequality takes the opposite direction for values of $s>1$. we have,
%$$  \Gamma\brkt{\frac{1}{\alpha},0,\frac{1}{\Lambda}} < \Gamma\brkt{\frac{1}{\alpha}}\prnt{1-e^{-\frac{\Gamma\brkt{1 + \frac{1}{\alpha}}^{-\alpha}}{\Lambda}}}^{1/ \alpha}$$
%For further simplifications, we use Lemma~\ref{lemma: gamma function}$(ii)$, which in our case is, 
%$$e^{\prnt{1- \frac{1}{\alpha}}\prnt{1- \gamma}} \leq \Gamma\brkt{1+ \frac{1}{\alpha}} \leq 1$$
%$$ 2^{\alpha -1} \leq \Gamma\brkt{1+\frac{1}{\alpha}}^{-\alpha} \leq 1.$$
%From the recurrence property, we obtain that $\Gamma\brkt{\frac{1}{\alpha}} \leq \alpha$. Thus, 
%\begin{align*}
%&\Pr \prnt{\frac{1+ P\Vert \bh_{i^*}\Vert^2}{1+ P\vert \langle \hat{\bh}_{i^*}, \bg_{j^*}\rangle \vert^2} \leq \alpha}
% \leq \brkt{\Lambda\prnt{1-e^{-\frac{2^{\alpha -1}}{\Lambda}}}}^{1/ \alpha}&
%%\label{eq:}
%\end{align*}
%Note that for the Taylor expansion for large values of $\Lambda$, yields
%\begin{multline*}
%\Pr \prnt{\frac{1+ P\Vert \bh_{i^*}\Vert^2}{1+ P\vert \langle \hat{\bh}_{i^*}, \bg_{j^*}\rangle \vert^2} \leq \alpha}\\
% \leq2^{1 - \frac{1}{\alpha}}\prnt{1 - \frac{2^{\alpha -2}}{\alpha \Lambda}} + O\prnt{\frac{1}{\Lambda}}^2
%\label{eq:}
%\end{multline*}
\end{proof}
%%%%%%%%%%%%%%%%%%%%%%		
For the lower bound, we utilize a similar approach, however, this time, we refer to user  $i^*$ as if its channel gain has exceeded a high threshold. 
Thus, since only sufficiently strong user $i^*$, whose gain is above threshold, is taken into account, then the numerator in (\ref{eq: sec capacity MRT}) is larger on average, thus, resulting  in a higher rate, which corresponds to a lower bound on the ratio CDF.

Let $u_k$ denote a threshold on the user's channel gain, such that a single strongest user exceeds it on average. Note that such a threshold can be obtained from the inverse incomplete Gamma function, which asymptotically, is exactly (\ref{eq: b_n}) with $v=2t$.  %\cite{didonato1986computation}.
Accordingly, we have the following upper bound.
%%%%%%%%%%%%%%%%%%%%%%%%%%%%%%%%%%%%%%%%%%%%%%%%%%%%%%%%%%%%%% CLAIM
\begin{lemma}\label{lemma: some bob above thr}
For sufficiently large $K$ and $M$, the CDF of the secrecy rate  in (\ref{eq: sec capacity MRT}) satisfies the following lower bound.
\begin{multline*}
\Pr \prnt{\frac{1+ P\Vert \bh_{i^*}\Vert^2}{1+ P\vert \langle \hat{\bh}_{i^*}, \bg_{j^*}\rangle \vert^2} \leq \alpha} \geq 1 - \frac{\alpha a_M}{a_K}\\
 \quad \cdot  e^{-\frac{\alpha(1+ b_M) - (1+ u_k) }{a_K}} \Gamma \brkt{\frac{\alpha a_M}{a_K},0, e^{-\frac{1+u_k -\alpha(1+ b_M)}{\alpha a_M}}}.
%\label{eq:}
\end{multline*}
%where $\Gamma\brkt{s,0,z} = \int_0^z \tau^s e^{-\tau} \mathrm{d}\tau$ is the lower incomplete Gamma function.
\end{lemma}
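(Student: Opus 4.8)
The plan is to run the argument behind Lemma~\ref{lemma: some eve above thr} with the roles of the numerator and denominator of (\ref{eq: sec capacity MRT}) exchanged. Since $\Pr(\mathrm{ratio}\le\alpha)=1-\Pr(\mathrm{ratio}>\alpha)$, a lower bound on the CDF is the same as an upper bound on $\Pr(\mathrm{ratio}>\alpha)$, so it suffices to inflate the numerator in a controlled way. First I would replace the extreme-value (Gumbel) law of $\Vert\bh_{i^*}\Vert^2$ by its peaks-over-threshold surrogate: with $u_k$ the threshold such that, on average, one of the $K$ i.i.d.\ $\chi^2(2t)$ gains exceeds it --- which is asymptotically (\ref{eq: b_n}) with $v=2t$ --- the conditional excess $\Vert\bh_{i^*}\Vert^2-u_k$ given that the threshold is crossed converges to an $\mathrm{Exp}(1/a_K)$ variable, because the polynomial prefactor $x^{t-1}$ in the $\chi^2(2t)$ tail is asymptotically flat near $x=u_k$. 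Thus I model $\Vert\bh_{i^*}\Vert^2\approx u_k+a_K\bv$ with $\bv\sim\mathrm{Exp}(1)$, while keeping the strongest eavesdropper projection at its Gumbel limit $\vert\langle\hat{\bh}_{i^*},\bg_{j^*}\rangle\vert^2\approx a_M\bxi+b_M$ with $\bxi$ standard Gumbel. As in the paragraph preceding Lemma~\ref{lemma: some eve above thr}, this surrogate has a strictly larger mean than the Gumbel maximum of the $K$ gains (which sits at $b_K+a_K\gamma$, with $\gamma<1$ the Euler--Mascheroni constant, while $u_k\approx b_K$), so the substitution can only raise the achievable rate and hence lower $\Pr(\mathrm{ratio}\le\alpha)$.

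Next I would carry out the explicit computation. Substituting the two surrogates into $\Pr(1+\Vert\bh_{i^*}\Vert^2>\alpha(1+\vert\langle\hat{\bh}_{i^*},\bg_{j^*}\rangle\vert^2))$ and solving for $\bv$ gives
\[
\Pr(\mathrm{ratio}>\alpha)=\mathbb{E}_{\bxi}\!\left[\Pr\!\left(\bv>\frac{\alpha(1+b_M)-(1+u_k)+\alpha a_M\,\bxi}{a_K}\right)\right],
\]
with $\bv$ and $\bxi$ independent. Because $\bv$ is exponential, the inner probability equals $e^{-(\cdot)}$ on the set where the argument is nonnegative and equals $1$ otherwise; the resulting cutoff $\bxi\ge\frac{(1+u_k)-\alpha(1+b_M)}{\alpha a_M}$ is exactly what turns the ensuing Gumbel integral into a \emph{lower incomplete} Gamma rather than a complete one. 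Writing the Gumbel density as $e^{-\xi-e^{-\xi}}$ and substituting $u=e^{-\xi}$, the integral $\int e^{-c\xi}e^{-\xi-e^{-\xi}}\,\mathrm{d}\xi$ over the admissible range collapses to $\int_0^{z}u^{c}e^{-u}\,\mathrm{d}u=\Gamma[c,0,z]$ with $c=\alpha a_M/a_K$ and $z=e^{-(1+u_k-\alpha(1+b_M))/(\alpha a_M)}$, and the accumulated prefactors assemble into $\tfrac{\alpha a_M}{a_K}\,e^{-(\alpha(1+b_M)-(1+u_k))/a_K}$. A final relaxation --- discarding the nonnegative boundary term $\Pr(\bxi<\xi_0)=e^{-z}$, which is negligible in the regime where the bound is informative ($\alpha(1+b_M)-(1+u_k)\to\infty$), and using $\alpha\ge1$ with $a_K=a_M=2$ so that $c=\alpha\ge1$ --- turns the exact surrogate value into the claimed lower bound.

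The delicate step, exactly as in Lemma~\ref{lemma: some eve above thr}, is not the Gumbel/incomplete-Gamma bookkeeping but the justification of the POT substitution as a genuine bound: the conditional-excess law does not stochastically dominate the true maximum pointwise --- for $\chi^2(2t)$ with $t\ge2$ one has $\bar{F}(u_k+y)/\bar{F}(u_k)=e^{-y/a_K}(1+O(y/u_k))$ --- so I would need to show that this $O(y/u_k)$ slack, together with the union-bound slack in $1-\bar{F}(\,\cdot\,)^{K}$, contributes only $o(1)$ as $K,M\to\infty$ and thus does not disturb the asymptotic inequality. A secondary check is that the boundary term $e^{-z}$ and the constant factor $\alpha a_M/a_K$ are dropped in the direction that preserves the inequality, which is why the statement is confined to sufficiently large $K$ and $M$ and, implicitly, to the asymptotically meaningful regime.
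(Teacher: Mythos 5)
Your proposal follows essentially the same route as the paper's proof: replace the selected user's gain by its peaks-over-threshold exponential surrogate above $u_k$ (rate $1/a_K$), keep the strongest eavesdropper's projection at its Gumbel limit $\mathrm{G}(a_M,b_M)$, and evaluate the exceedance probability of the ratio via the substitution $u=e^{-\xi}$ into a lower incomplete Gamma function; the paper merely integrates in the opposite order (over the exponential excess, with the Gumbel CDF inside), which is the same double integral by Fubini. The one wobble is your final step of ``discarding'' the boundary term $e^{-z}=\Pr(\bxi<\xi_0)$, which by itself moves the inequality in the wrong direction; it is in fact unnecessary, because integrating $\int_0^z u^{c}e^{-u}\,\mathrm{d}u$ by parts (with $c=\alpha a_M/a_K$ and $e^{-A/a_K}z^{c}=1$, $A=\alpha(1+b_M)-(1+u_k)$) shows that $e^{-z}$ cancels exactly and the sum collapses to $\frac{\alpha a_M}{a_K}\,e^{-A/a_K}\int_0^z u^{c-1}e^{-u}\,\mathrm{d}u$, which is precisely the paper's expression and the stated bound.
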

%%%%%%%%%%%%%%%%%%%%%%%%%%%%%%%%%%%%%%%%%%%%%%%%%%%%%%%%%%%% CLAIM
The proof is given in the Appendix.
%%%%%%%%%%%%%%%%%%%%%%%%%%%%%%%%%%%%%%%%%%%
%%%%%%%%%%%%%%%%%%%%%%%%%%%%%%%%%%%%%

Again, to gain intuition, we have the following.
\begin{corollary}\label{coro: meaningful lower bound}
For $\alpha \geq 1$, the outage probability satisfies the following bound.
\begin{multline*}
\Pr \prnt{\frac{1+ P\Vert \bh_{i^*}\Vert^2}{1+ P\vert \langle \hat{\bh}_{i^*}, \bg_{j^*}\rangle \vert^2} \leq \alpha} \\
>1- \Gamma\brkt{1+\alpha}\Lambda(\alpha)^{-1}\prnt{1-e^{-\Lambda(\alpha)^{1/\alpha}}  }^{\alpha},
%\label{eq:}
\end{multline*}
where $\Lambda(\alpha) =  \prnt{\sqrt{e}M}^\alpha\frac{\Gamma(t)}{\sqrt{e}K \prnt{\log K}^{t-1}}$.
\end{corollary}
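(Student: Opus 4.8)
\textbf{Proof proposal for Corollary~\ref{coro: meaningful lower bound}.}
The plan is to mirror the proof of Corollary~\ref{coro: meaningful upper bound}: substitute the explicit normalizing constants into Lemma~\ref{lemma: some bob above thr}, repackage everything in terms of $\Lambda(\alpha)$, and then bound the remaining lower incomplete Gamma function using Claim~\ref{claim: gamma function}$(i)$.

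First I would put $a_K=a_M=2$ and take $u_k$ equal to its asymptotic value, namely (\ref{eq: b_n}) with $v=2t$ (the inverse of the complement CDF of $\chi^2(2t)$), while $b_M$ uses (\ref{eq: b_n}) with $v=2$. With these values $\tfrac{\alpha a_M}{a_K}=\alpha$, and a short computation gives $e^{(1+u_k)/2}=\sqrt{e}\,K(\log K)^{t-1}/\Gamma[t]$ and $e^{(1+b_M)/2}=\sqrt{e}\,M$, so that $e^{-\frac{\alpha(1+b_M)-(1+u_k)}{a_K}}=\Lambda(\alpha)^{-1}$ and $e^{-\frac{1+u_k-\alpha(1+b_M)}{\alpha a_M}}=\Lambda(\alpha)^{1/\alpha}$. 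Hence Lemma~\ref{lemma: some bob above thr} collapses to
\[
\Pr\prnt{\frac{1+P\Vert\bh_{i^*}\Vert^2}{1+P\vert\langle\hat{\bh}_{i^*},\bg_{j^*}\rangle\vert^2}\leq\alpha}\;\geq\;1-\alpha\,\Lambda(\alpha)^{-1}\,\Gamma\brkt{\alpha,0,\Lambda(\alpha)^{1/\alpha}}.
\]

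Next, since a lower bound on this probability requires an \emph{upper} bound on the subtracted term, I would bound $\Gamma\brkt{\alpha,0,\Lambda(\alpha)^{1/\alpha}}$ from above. Because a nonnegative target secrecy rate forces $\alpha=2^{R_s}\geq1$, Claim~\ref{claim: gamma function}$(i)$ in its $s>1$ direction gives $\Gamma\brkt{\alpha,0,z}<\Gamma\brkt{\alpha}\prnt{1-e^{-z}}^{\alpha}$ with $z=\Lambda(\alpha)^{1/\alpha}$; combining this with the recurrence $\alpha\Gamma\brkt{\alpha}=\Gamma\brkt{1+\alpha}$ yields
\[
\Pr\prnt{\frac{1+P\Vert\bh_{i^*}\Vert^2}{1+P\vert\langle\hat{\bh}_{i^*},\bg_{j^*}\rangle\vert^2}\leq\alpha}\;>\;1-\Gamma\brkt{1+\alpha}\Lambda(\alpha)^{-1}\prnt{1-e^{-\Lambda(\alpha)^{1/\alpha}}}^{\alpha},
\]
which is the claimed bound.

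The computation is routine; the points that need care are keeping the several exponent signs straight through the substitution, and observing that it is precisely the $s>1$ branch of Claim~\ref{claim: gamma function}$(i)$ that produces an upper bound on the incomplete Gamma (the opposite of what was used in Corollary~\ref{coro: meaningful upper bound}). The boundary case $\alpha=1$ is then covered by continuity, since there $\Gamma\brkt{1,0,z}=1-e^{-z}$ exactly and the two bounds coincide.
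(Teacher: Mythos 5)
Your proposal is correct and follows essentially the same route as the paper's own proof: substitute the normalizing constants into Lemma~\ref{lemma: some bob above thr} to get $1-\alpha\Lambda(\alpha)^{-1}\Gamma\brkt{\alpha,0,\Lambda(\alpha)^{1/\alpha}}$, apply the $s>1$ branch of Claim~\ref{claim: gamma function}$(i)$, and finish with the recurrence $\alpha\Gamma\brkt{\alpha}=\Gamma\brkt{1+\alpha}$. Your explicit verification that $e^{-\frac{\alpha(1+b_M)-(1+u_k)}{a_K}}=\Lambda(\alpha)^{-1}$ and $e^{-\frac{1+u_k-\alpha(1+b_M)}{\alpha a_M}}=\Lambda(\alpha)^{1/\alpha}$ is a detail the paper leaves implicit, but the argument is the same.
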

\begin{proof}
Similar to Corollary~\ref{coro: meaningful upper bound}, we apply the normalizing constants in (\ref{eq: a_n})-(\ref{eq: b_n}) to Lemma~\ref{lemma: some bob above thr}, then, set $\Lambda(\alpha) =  \frac{\prnt{\sqrt{e}M}^\alpha \Gamma(t)}{\sqrt{e}K \prnt{\log K}^{t-1}}$.
Thus, we have
\begin{align*}
&\Pr \prnt{\frac{1+ P\Vert \bh_{i^*}\Vert^2}{1+ P\vert \langle \hat{\bh}_{i^*}, \bg_{j^*}\rangle \vert^2} \leq \alpha}&\\ 
 &\geq 1 - \alpha \Lambda(\alpha)^{-1} \Gamma \brkt{\alpha,0, \Lambda(\alpha)^{1/\alpha}}&\\
 &\stackrel{(a)}{>} 1 - \alpha \Lambda(\alpha)^{-1} \Gamma\brkt{\alpha}\prnt{1-e^{-\Lambda(\alpha)^{1/\alpha}}}^{\alpha}&\\
& \stackrel{(b)}{=} 1- \Gamma\brkt{1+\alpha}\Lambda(\alpha)^{-1}\prnt{1-e^{-\Lambda(\alpha)^{1/\alpha}}  }^{\alpha}&
%\label{eq:}
\end{align*}
where $(a)$ follows form Claim~\ref{claim: gamma function}$(i)$ and $(b)$ follows from the Gamma function recurrence property.
\end{proof}

%%%%%%%%%%%%%%%%%%%%%%%%%%%%%%%%%%%%%%%%%%%%%%%%

\off{
\begin{lemma}\label{lem: mume upper}
For large enough $K$ and $M$, the distribution of the secrecy rate in (\ref{eq: sec capacity MRT}) satisfies the following upper bound.
\begin{multline*}
\Pr\prnt{\frac{1+ P\Vert \bh_{i^*}\Vert^2}{1+ P \vert \langle \hat{\bh}_{i^*}, \bg_{j^*}\rangle \vert^2} \leq \alpha} \\
 \leq \frac{\Gamma\left[1+ \frac{a_K}{\alpha a_M}\right]}{e^{\frac{1+b_K -\alpha(1+b_M)}{\alpha a_M}}+ \frac{a_K}{\alpha a_M} \Gamma\left[1+\frac{a_K}{\alpha a_M}\right]}
%\label{eq:}
\end{multline*}
%\Pr\prnt{\textmd{secrecy outage}} > \prnt{\frac{a_M}{a_K}}^{2} \log  \prnt{1 + \frac{a_K}{a_M} \prnt{e^{-\frac{b_K - b_M}{a_M}}} \Gamma \brkt{1 + \frac{a_K}{a_M}}}
where $a_K$, $a_M$ and $b_K$, $b_M$ are the normalizing constants given in (\ref{eq: a_n}) and (\ref{eq: b_n}), respectively.
\end{lemma}
%\SEFI{Note that when $a_K \sim a_M$, the outage is mainly influenced by the difference between $b_K$ and $b_M$. Specifically, for the Gamma distribution .... }
\begin{proof}
According to \cite[Corollary 4]{dragomir1900inequalities}, for every $n \in \mathbb{N}$, $n\geq 1$ and $x>0$,  the following inequality holds.
\begin{equation}
(n-1)! x^{2(n-1)} \Gamma[x]^{n}\leq  \Gamma\left[ n x \right]
\label{eq: gamma lower bound}
\end{equation}
Further, it is easy to see that for  every $n \in \mathbb{N}$, $n\geq 1$ and $0<x \leq 1$ the following inequality holds.
\begin{equation}
\Gamma\left[ n x  + 1 \right] \leq \Gamma\brkt{n + 1}
\label{eq: gamma upper bound}
\end{equation}
Thus, let us express the alternating sum in Theorem~\ref{theorem: mu-me exact outage} as the sum of the positive-sign terms subtracting the sum of negative-sign terms.
That is,
\begin{align*}
&\sum_{k=0}^\infty \frac{(-1)^k e^{-(k+1)\frac{1+b_K - \alpha(1+b_M)}{\alpha a_M}}}{(k+1)!} \Gamma\left[1+\frac{(k+1)a_K}{\alpha a_M}\right] &\\
&= \sum_{k=0}^\infty \frac{e^{-(2k+1)\frac{1+b_K - \alpha(1+b_M)}{\alpha a_M}}}{(2k+1)!} \Gamma\left[1+\frac{(2k+1)a_K}{\alpha a_M}\right] &\\
& \qquad - \frac{a_K}{\alpha a_M}\sum_{k=0}^\infty \frac{e^{-(2k+2)\frac{1+b_K - \alpha(1+b_M)}{\alpha a_M}}}{(2k+1)!} \Gamma\left[\frac{(2k+2)a_K}{\alpha a_M}\right]
%\label{eq:}
\end{align*}
where the equality follows from (\ref{eq: infinite sum ex2}) and (\ref{eq: infinite sum ex1}), respectively.
To obtain an upper bound let us replace the positive sum by inequality (\ref{eq: gamma upper bound}), and the negative sum by (\ref{eq: gamma lower bound}). Accordingly,
\begin{align*}
&\sum_{k=0}^\infty \frac{(-1)^k e^{-(k+1)\frac{1+b_K - \alpha(1+b_M)}{\alpha a_M}}}{(k+1)!} \Gamma\left[1+\frac{(k+1)a_K}{\alpha a_M}\right] &\\
& \leq  \sum_{k=0}^\infty e^{-(2k+1)\frac{1+b_K - \alpha(1+b_M)}{\alpha a_M}} &\\
& \quad - \frac{\alpha a_M}{a_K}\sum_{k=0}^\infty \prnt{e^{-\frac{1+b_K - \alpha(1+b_M)}{\alpha a_M}}\frac{a_K}{\alpha a_M}\Gamma\brkt{1+\frac{a_K}{\alpha a_M}}}^{2k+2} &\\
&= e^{-\frac{1+b_K - \alpha(1+b_M)}{\alpha a_M}}\prnt{1 - e^{-2\frac{1+b_K - \alpha(1+b_M)}{\alpha a_M}}}^{-1} &\\
& \quad - \frac{a_K}{\alpha a_M}\prnt{\frac{\Gamma\brkt{1+\frac{a_M}{\alpha a_M}}}{e^{\frac{1+b_K - \alpha(1+b_M)}{\alpha a_M}}}}^2\prnt{1 - \prnt{\frac{\frac{a_K}{\alpha a_M}\Gamma\brkt{1+\frac{a_M}{\alpha a_M}}}{e^{\frac{1+b_K - \alpha(1+b_M)}{\alpha a_M}}}^2}}^{-1}
%\label{eq:}
\end{align*}
\end{proof}
%%%%%%%%%
\begin{proof}
 According to \cite[Corollary 4]{dragomir1900inequalities}, for every $k \in \mathbb{N}$, $k\geq 0$ and $x>0$,  the following inequality holds.
$$  k! x^{2k} \Gamma[x]^{k+1}\leq  \Gamma\left[ (k+1) x \right] .$$
Thus, beginning from the expression in Theorem~\ref{theorem: mu-me exact outage}, we have
\small
\begin{align*}
&\Pr(\balpha \leq \alpha ) &\\
&=  \frac{a_K}{\alpha a_M}\sum_{k=0}^\infty \frac{(-1)^k e^{-(k+1)\frac{1+b_K - \alpha(1+b_M)}{\alpha a_M}} }{k!} \Gamma\left[\frac{(k+1)a_K}{\alpha a_M}\right]&\\
&\leq  \sum_{k=0}^\infty (-1)^k &\\
& \qquad \cdot e^{-(k+1)\frac{1+b_K -\alpha(1+b_M)}{\alpha a_M}}\prnt{\frac{a_K}{\alpha a_M}}^{2k+1}\Gamma\left[\frac{a_K}{\alpha a_M}\right]^{k+1} &\\
%& = \prnt{\frac{\alpha a_M}{a_K}}^{2} \log  \prnt{1 + \frac{a_K}{\alpha a_M} \prnt{e^{-\frac{1+b_K -\alpha(1+b_M)}{\alpha a_M}}} \cdot \Gamma \brkt{1 + \frac{a_K}{\alpha a_M}}}&
& = \frac{\Gamma\left[1+ \frac{a_K}{\alpha a_M}\right]}{e^{\frac{1+b_K -\alpha(1+b_M)}{\alpha a_M}}}\sum_{k=0}^\infty (-1)^k &\\
& \qquad \cdot \prnt{e^{-\frac{1+b_K -\alpha(1+b_M)}{\alpha a_M}} \frac{a_K}{\alpha a_M} \Gamma\left[1+\frac{a_K}{\alpha a_M}\right]}^k &\\
& = \frac{\Gamma\left[1+ \frac{a_K}{\alpha a_M}\right]}{e^{\frac{1+b_K -\alpha(1+b_M)}{\alpha a_M}}} &\\
& \qquad \cdot \prnt{1+ e^{-\frac{1+b_K -\alpha(1+b_M)}{\alpha a_M}} \frac{a_K}{\alpha a_M} \Gamma\left[1+\frac{a_K}{\alpha a_M}\right]}^{-1}
%\label{eq:}
\end{align*}
\normalsize
%So, setting $\alpha=1$ to obtain a lower bound on the outage probability.
where the inequality follows since the infinite sum of alternating geometric series with common ratio $r$, has the form of $1/(1+r)$. Hence, decreasing $r$  results in bigger sum. Thus, Lemma~\ref{lem: mume upper} follows.
\end{proof}
Note that the sum converges only if the term in the sum is less than one.
Accordingly, for the case in hand, $a_M =2$ and $a_K \to 2$ as $K \to \infty$. Yet, for all $\alpha\geq a_K/a_M $, the middle term is less than one. Further, it is well known that $\Gamma[x]\leq 1$ for all $1\leq x \leq 2$, so the Gamma term is also less than one.
However, the exponential term is less than one, only when $1+b_K > \alpha(1+b_M)$. Specifically, in our case, $b_M = 2\log M$ and $b_K = 2\prnt{\log K + (t-1)\log\log K - \log(t-1)!}$, i.e., for $M = \Theta\prnt{K \prnt{\log K}^{t-1}/(t-1)!}$ the exponential term is is greater than $1$ for all $\alpha$. Thus, the ratio between the number of eavesdropper and the number of users is crucial for the problem in hand.
%\SEFI{maybe we can see converges according to Cesaro theorem?}

We now present the corresponding upper bound.
\begin{lemma}\label{lem: mume lower}
Given large number of users and eavesdroppers, the secrecy capacity distribution satisfies the following asymptotic lower bound.
\small
\[
\Pr\prnt{\frac{1+ P\Vert \bh_{i^*}\Vert^2}{1+ P \vert \langle \hat{\bh}_{i^*}, \bg_{j^*}\rangle \vert^2} \leq \alpha}  \geq \prnt{1 + e^{\frac{1+b_K -\alpha(1+b_M)}{\alpha a_M}}}^{-1}
\]
\normalsize
\end{lemma}
\begin{proof}
For the upper bound we use the note that $a_M ~ a_K$, that is, $\lim_{K \to \infty} \frac{a_K}{a_M} =1$. Specifically, $a_M=2$ and $a_K$ converges to $2$ as $K$ grows (see Figure). For example, when  $K \geq 30$ and $t=16$, $a_K<4$.  Accordingly, for $\alpha \geq \frac{a_K}{a_M}$, the inequality $\Gamma \brkt{2+k} \geq \Gamma \brkt{1+(k+1)\frac{a_K}{\alpha a_M}}$ holds. %\SEFI{I also see that for $K \geq t$, $a_K<\sqrt{t}$.} for $K \geq 30$ and $t=16$, $a_K<4$. Thus, for sufficiently large $K$ and $\alpha \geq 1$ the inequality $\Gamma \brkt{1+2(k +1)} \geq \Gamma \brkt{1+(k+1)\frac{a_K}{\alpha a_M}}$ holds.
Hence,
\small
\begin{align*}
&\Pr(\balpha \leq \alpha ) \geq \sum_{k=0}^\infty \frac{(-1)^k}{(k+1)!} e^{-(k+1)\frac{1+	b_K - \alpha (1+b_M)}{\alpha a_M}}\cdot\Gamma\left[2+k\right]&\\
& = \prnt{ 1 + e^{\frac{1+	b_K - \alpha (1+b_M)}{\alpha a_M}}}^{-1}.
%\label{eq:}
\end{align*}
\normalsize
Thus, Lemma~\ref{lem: mume lower} follows.
Note that also here, $1+	b_K > \alpha (1+b_M)$ is required in order for the sum to converge.
\end{proof}
%%%%%%%%%%%%%%%%%%%%%%%%%%%%%%%%%%%%%%%%%%%%%
\begin{figure}
	\centering
		\includegraphics[scale=0.85]{../ratio_bounds.eps}
	\caption{Simulation and analysis of the ratio distribution bounds for $M=K=30$ and  $t=2,4,8$ antennas, top to bottom, respectively, according to Lemma~\ref{lem: mume lower} and Lemma~\ref{lem: mume upper}.  Note that the lower bounds are relatively tight for all values of $\alpha$, whereas the upper bounds are tight for small values of $\alpha$, yet are loose for larger values (obviously, one can take the minimum with $1$).}
	\label{fig: ratio_bounds}
\end{figure}

%%%%%%%%%%%%%%%%%%%%%%%%%%%%%%%%%%%%%%%%%%%%
\subsection{The Expected Secrecy }
\begin{corollary}\label{coro: expected ratio}
The expected ratio in (\ref{eq: sec capacity MRT}) satisfies the following lower bound  %can be obtained from Lemma~\ref{lem: mume lower} by using Markov's inequality. In particular,
\begin{multline*}
\mathrm{E}\brkt{\frac{1+ P\Vert \bh_{i^*}\Vert^2}{1+ P \vert \langle \hat{\bh}_{i^*}, \bg_{j^*}\rangle \vert^2} } \\
\geq (1+b_K)\frac{1 -2a_M + b_M + a_M \log \prnt{\frac{a_M}{1 - a_M + b_M}}}{(1-2a_M+b_M)^2}% \geq \max_{\alpha} \alpha \Pr\prnt{\frac{1+ P\Vert \bh_{i^*}\Vert^2}{1+ P \vert \langle \hat{\bh}_{i^*}, \bg_{j^*}\rangle \vert^2} > \alpha} \geq \max_{\alpha} \alpha \prnt{1 - \prnt{ 1 + e^{\frac{1+	b_K - \alpha (1+b_M)}{\alpha a_M}}}^{-1}}&
%\label{eq:}
\end{multline*}
\end{corollary}
\begin{proof}
Note that calculating the expectation analytically is not feasible for the distribution in Theorem~\ref{theorem: mu-me exact outage}. Further, even when trying to tackle the expectation by integrating the bounds is intricate. Thus, we estimate the expectation, by utilising the L-moments approach, which is very useful when the distribution function is invertible \cite{hosking1990moments}. Specifically, in the L-moments approach, the expectation is approximated by $\int_0^1 x(F)dF$.

Accordingly, by inverting the lower bound, given in  Lemma~\ref{lem: mume lower}, we have
\begin{align*}
&\mathrm{E}\brkt{\frac{1+ P\Vert \bh_{i^*}\Vert^2}{1+ P \vert \langle \hat{\bh}_{i^*}, \bg_{j^*}\rangle \vert^2}} &\\
&  \geq \int_0^1 \frac{1+b_K}{a_M\prnt{ \log\prnt{e^{\frac{1+b_M}{a_M}}} + \log\prnt{1/F -1}} } \mathrm{d}F&\\
& = \int_0^1 \frac{1+b_K}{ 1+b_M + a_M\log\prnt{1/F -1}}  \mathrm{d}F&
%\label{eq:}
\end{align*}
Noting  that
$\frac{1}{\log\prnt{1/F -1} } \geq  \frac{1}{\prnt{1/F-2}}$, we have
\begin{align*}
&\mathrm{E}\brkt{\frac{1+ P\Vert \bh_{i^*}\Vert^2}{1+ P \vert \langle \hat{\bh}_{i^*}, \bg_{j^*}\rangle \vert^2}}  &\\ & \geq \int_0^1 \frac{1+b_K}{ 1+b_M + a_M\prnt{1/ F-2}}  \mathrm{d}F &\\
& = \prnt{1+ b_K}\frac{1-2a_M+b_M +a_M \log \prnt{\frac{a_M}{1-a_M+b_M}}}{\prnt{1-2a_M+b_M}^2}
%\label{eq:}
\end{align*}
Hence, Corollary~\ref{coro: expected ratio} follows.
\end{proof}

%%%%%%%%%%%%%%%
%\begin{figure}
%	\centering
%		\includegraphics[scale=0.9]{../expected_ratio.eps}
%	\caption{The expected value of the ratio in (\ref{eq: sec capacity MRT}) vs.\ number of users and eavesdroppers when $M=K=n$, %for $t=16,8,4,2$ antennas, top to bottom, respectively. }
%	\label{fig: expected_ratio}
%\end{figure}
\begin{figure*}
\begin{minipage}[t]{0.3\linewidth}
\centering
\includegraphics[angle=0, width=\textwidth]{../expected_ratio_m_eq_k.eps}
\caption{$M \sim K$}
\label{fig: expected_ratio_m_eq_k}
 \end{minipage}
  \hspace{.25in}
  \begin{minipage}[t]{0.3\linewidth}
    \centering
% \subfigure[600 kbps]{
   \includegraphics[angle=0,width=\textwidth]{../expected_ratio_logm_eq_k.eps}
   %\caption{600 kbps}
    \caption{$K \sim \log M$}
    \label{fig: expected_ratio_logm_eq_k}
  \end{minipage}
   \hspace{.25in}
  \begin{minipage}[t]{0.3\linewidth}
    \centering
% \subfigure[600 kbps]{
   \includegraphics[angle=0,width=\textwidth]{../expected_ratio_m_eq_logk.eps}
   %\caption{600 kbps}
    \caption{$M \sim \log K$}
    \label{fig: expected_ratio_m_eq_logk}
  \end{minipage}
\end{figure*}
\SEFI{Please set the corresponding constants and provide useful insight.}
} %%%%%%%%%%%%%%%%%%%%%%END OFF

\section{Simulation Results}
In this section, we present simulate results for the suggested scheduling scheme and compare them to the analysis.

Figure~\ref{fig: ratio_approx_distribution} depicts the distribution of the ratio in (\ref{eq: sec capacity MRT}) for three cases and compare it to the analytical results herein. In particular, we simulate the secrecy rate in (\ref{eq: sec capacity general}) for three cases: (i) When beamforming to strongest user $i^*$, while the strongest, above-threshold, eavesdropper is wiretapping, (ii) when beamforming to strongest user $i^*$, while strongest eavesdropper $j^*$ is wiretapping (without threshold constraint). (iii) when beamforming to strongest, above-threshold, user, while the strongest eavesdropper $j^*$ is wiretapping. The dark gray, gray and light gray bars represents these results, respectively. Then we evaluate the bounds given in Lemma~\ref{lemma: some eve above thr} and Lemma~\ref{lemma: some bob above thr} and compare then to the sum of the first $100$ terms in Theorem~\ref{theorem: mu-me exact outage}, for $t=2,4,8$ and $M=K=30$. It is clear the bounds are tight and provide excellent approximation to (\ref{eq: sec capacity MRT}).
%%%%%%%%%%%%%%%%%%%%%%%%%%%%%%%%%%%
\begin{figure}
	\centering
		\includegraphics[scale=0.85]{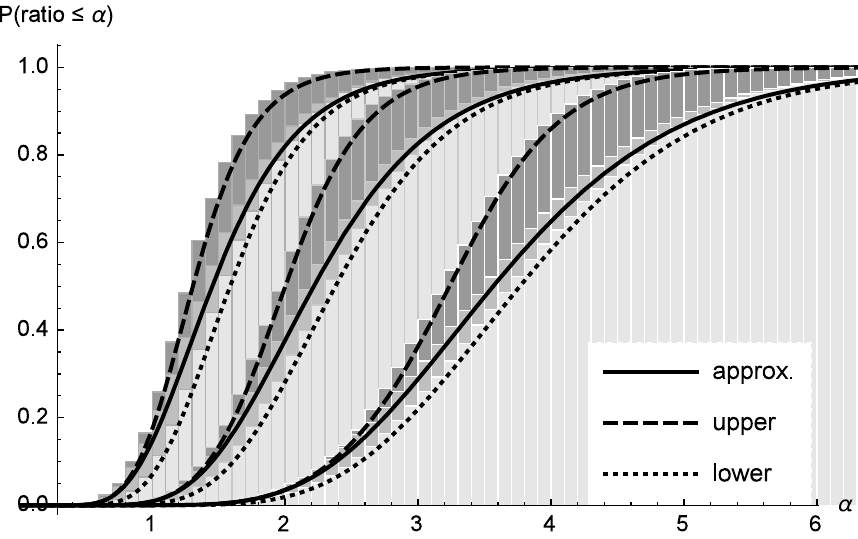}
	\caption{Simulation and analysis of the  ratio distribution in (\ref{eq: sec capacity MRT}) for $M=K=30$ and $t=2,4,8$ antennas, left to right, respectively. The solid line represents the sum of the  first $100$ terms in Theorem~\ref{theorem: mu-me exact outage}.  The dashed and dotted lines represent the distribution upper and lower bounds given in Lemma~\ref{lemma: some eve above thr} and Lemma~\ref{lemma: some bob above thr}, respectively. }
	\label{fig: ratio_approx_distribution}
\end{figure}
%%%%%%%%%%%%%%%%%%%%%%%%%%%%%%%%%%%%%%
For comparison between the  bounds given in Corollary~\ref{coro: meaningful upper bound}, Lemma~\ref{lemma: some eve above thr}, Lemma~\ref{lemma: some bob above thr} and Corollary~\ref{coro: meaningful lower bound}, for $t=4$, $M=K=1000$, see Figure~\ref{fig: coros vs claims}.

%In Figure~\ref{fig: coros vs claims}, we compare between the  bounds , respectively, for $t=4$, $M=K=1000$.

%%%%%%%%%%%%%%%%%%%%%%%%%%%%%%%%%%%%%%%
\begin{figure}
	\centering
		\includegraphics[scale=0.35]{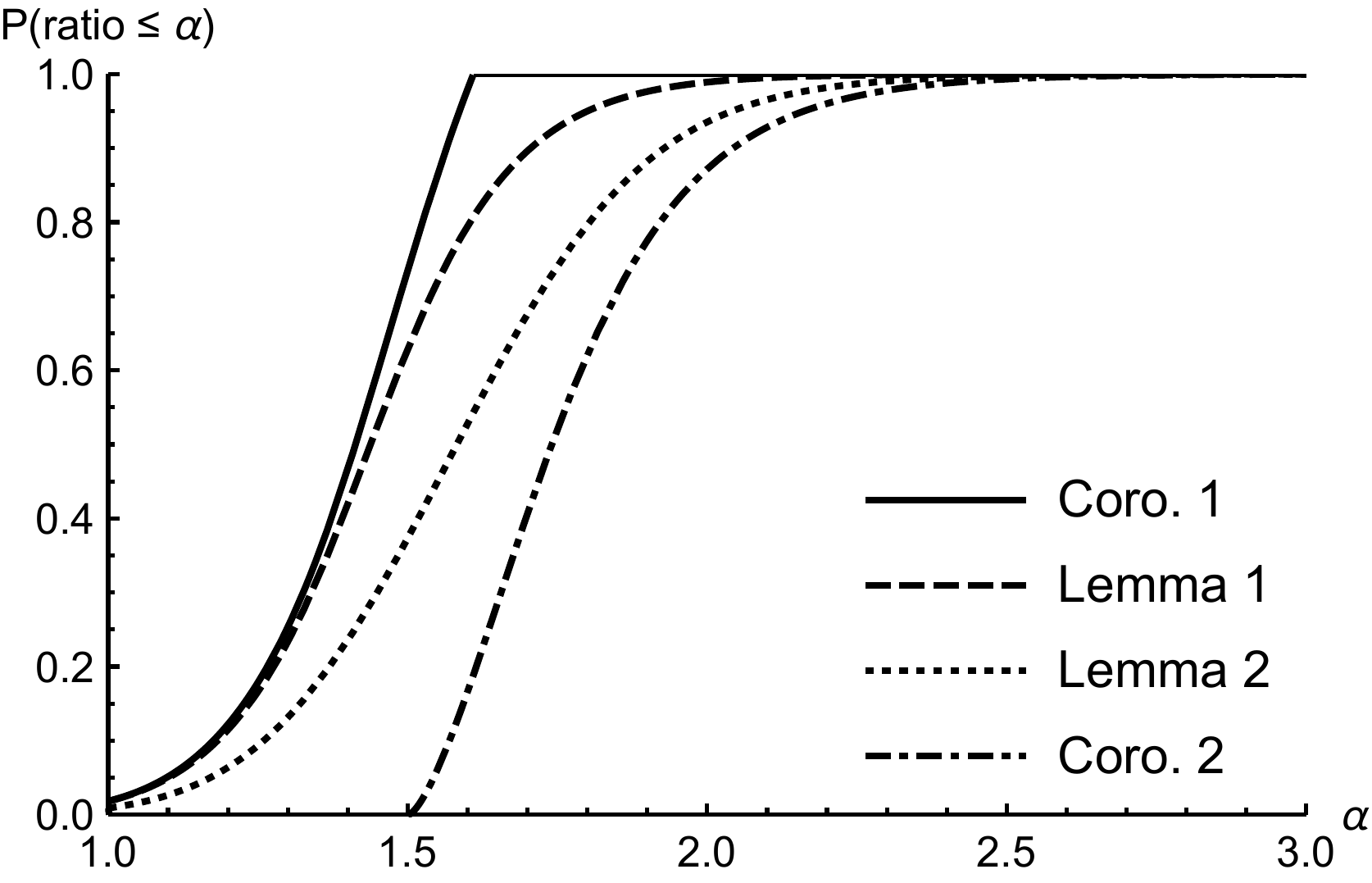}
	\caption{A comparison between the  bounds given in Corollary~\ref{coro: meaningful upper bound}, Lemma~\ref{lemma: some eve above thr}, Lemma~\ref{lemma: some bob above thr} and Corollary~\ref{coro: meaningful lower bound}, respectively, for $t=4$, $M=K=1000$.  }
	\label{fig: coros vs claims}
\end{figure}
%%%%%%%%%%%%%%%%%%%%%%%%%%%%%%%%%%%%%%%%%%%%
Figure~\ref{fig: bob eve ratio} depicts the secrecy outage  probability. In particular, we set  $t=4$ and $\alpha=2$, and fix the number of users to $K=1000$. Then  we examine what is the secrecy outage probability as a function of the number of eavesdroppers. The dots represents the critical ratio between $M$ and $K$ such that $\Lambda (\alpha)=1$, which is exactly $M=\Theta \prnt{K \prnt{\log K}^{t-1}}^{1/\alpha}$. Indeed, for values of $M$ which have smaller order than the critical value, result in small values of $\Lambda(\alpha)$, hence, small outage probability.
%%%%%%%%%%%%%%%%%%%%%%%%%%%%%%%%%%%%
\begin{figure}
	\centering
		\includegraphics[scale=0.35]{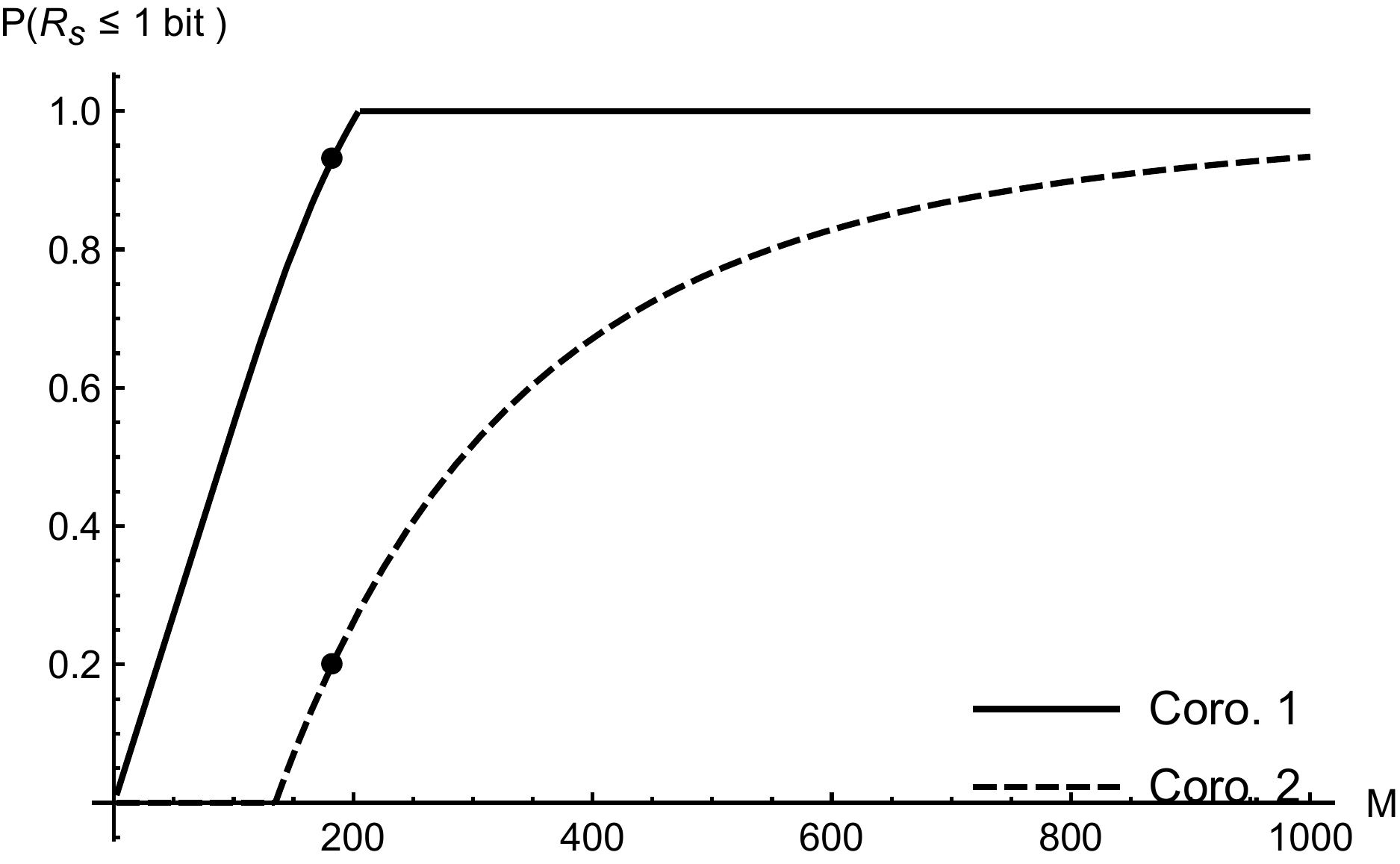}
	\caption{The upper and lower bounds given in Corollary~\ref{coro: meaningful upper bound} and Corollary~\ref{coro: meaningful lower bound}, for $K=1000$, $t=4$ and $\alpha=2$, as a function of the number of eavesdroppers $M$. The marked dot represents the critical ratio where $\Lambda(\alpha)=1$. }
	\label{fig: bob eve ratio}
\end{figure}

\section{Conclusion}
We have studied the secrecy rate and outage probability in the presence of multiple legitimate users and eavesdroppers for the complex Gaussian MISO channel, in which only statistical knowledge on the wiretap channels is available to the transmitter. Specifically, we analyzed the secrecy rate distribution when transmitting to the strongest user while many eavesdroppers are wiretapping, and derived the resulting secrecy outage probability.  We showed that the secrecy rate in such transmission scheme behaves like the ratio of Gumbel distributions, which does not have a closed form expression. Thus, tight upper and lower bounds on the limiting secrecy rate distribution were given. These bounds are tractable and provide insight on the scaling law and the effect of the system parameters on the secrecy capacity.
In particular, the reduction in the secrecy rate as the \emph{number of eavesdroppers grows}, compared to the boost in the secrecy rate as \emph{the number of legitimate users grows} was quantified, and we proved that in the presence of $n$ eavesdroppers, to attain asymptotically small secrecy outage probability with $t$ transmit antennas, $\Omega\prnt{n \prnt{\log n}^{t-1}}$ legitimate users are required.
To support our claims, we conducted rigorous simulations that shows that our bounds are tight.

%\appendix
\section*{Appendix}\label{sec. appendix}
\setcounter{subsection}{0}
%%%%%%%%%%%%%%%%%%%%%%%%%%%%%%%%%%%%%%%%%%%%%%%%%%%%%%%%%%%%%%%%%%%%%%%%%%%%%%%%%
\subsection{Proof of Theorem~{\ref{theorem: mu-me exact outage}}}\label{sec. theorem1 proof}
%\begin{proof}
First, note that in (\ref{eq: sec capacity MRT}), the squared norm in the numerator and  the squared inner product in the denominator are independent. Specifically,  while the former represents the length of the user's channel, the latter represents the square of the product of the eavesdropper channel's magnitude and the cosine of the phase between the eavesdropper's channel and the user's channel. Since the angle between i.i.d. Gaussian vectors is independent of their norms \cite{kampeas2013itw}, the distribution of the squared inner product in the denominator is identical for all eavesdroppers and independent of the user index. 

Accordingly, to ease notation, let $\bgamma_{i^*} = P \Vert \bh_{i^*}\Vert^2$  and let $\bgamma_{j^*} = P \vert \langle \hat{\bh}_{i^*}, \bg_{j^*}\rangle \vert^2$. Note that, asymptotically, as both are maximum of series of random variables, they have extreme type distributions, e.g., $\bgamma_{i^*} \sim \mathrm{G}(P a_K, P b_K)$ and $\bgamma_{j^*} \sim \mathrm{G}(P a_M,P b_M)$, where $\mathrm{G}(a_n,b_n)$ denotes the Gumbel distribution with the normalizing constants $a_n$ and $b_n$, given in (\ref{eq: a_n}) and (\ref{eq: b_n}), respectively. %\SEFI{$P\Vert \bh \Vert^2$ follows the Gamma distribution which generalize the $\chi^2$ distribution. However, the maxima still follows the Gumbel distribution with $a_n = 2P$, and $b_n = 2P(\log n + (v-1)\log \log n - \log \Gamma[v])$}.  
Further, let us define the ratio transform  $\balpha = \frac{1 + \bgamma_{i^*}}{1 + \bgamma_{j^*}}$ and $\bbeta = \bgamma_{j^*}$,  with the inverse transform, $\bgamma_{i^*}= \balpha(1+\bbeta) -1$. Accordingly, we have, 
\small
\begin{align*}
&\Pr\prnt{\balpha > \alpha , \bbeta > \beta} = \Pr\prnt{ \bgamma_{i^*} > \alpha(1 + \bgamma_{j^*})-1, \bgamma_{j^*}> \beta }&\\
%& = \int_\beta^\infty \frac{1}{a_M}e^{-\frac{\gamma_j - b_M}{a_M}}e^{-{e^{-\frac{\gamma_j - b_M}{a_M}}}}\prnt{1-e^{-e^{\frac{b_K+1 -u}{a_K}}\cdot e^{-\frac{\alpha \gamma_j}{a_K}}}}\mathrm{d}\gamma_j &\\
& = \int_\beta^\infty f_{\bgamma_{j^*}}(\gamma_j)\prnt{1 - F_{\bgamma_{i^*}}\prnt{\alpha(1 + \gamma_j)-1}}\mathrm{d}\gamma_j &\\
& = \prnt{1-F_{\bgamma_{j^*}}(\beta)} - \int_\beta^\infty f_{\bgamma_{j^*}}(\gamma_j)F_{\bgamma_{i^*}}\prnt{\alpha(1 + \gamma_j)-1} \mathrm{d}\gamma_j 	 &\\
%\label{eq:}
\end{align*}
\normalsize
Noting that $f_{\bgamma_{j^*}}(\gamma) = \frac{1}{a_M}e^{-\frac{\gamma - b_M}{a_M}}e^{-{e^{-\frac{\gamma - b_M}{a_M}}}}$ and $F_{\bgamma_{i^*}}(\gamma) = e^{-{e^{-\frac{\gamma - b_K}{a_K}}}}$, we exchange variables such that $e^{-\frac{\gamma - b_M}{a_M}} = \zeta$, hence, $\gamma = -a_M \log (\zeta)+ b_M$ and $\mathrm{d}\gamma = \frac{-a_M}{\zeta}\mathrm{d}\zeta$. Further, we note that $\prnt{\zeta \cdot e^{-\frac{b_M}{a_M}}}^{\frac{\alpha a_M}{a_K}}$ is equal to $e^{- \frac{\alpha \gamma}{a_K}}$, which is useful for this case. Accordingly, we have,  
\begin{align*}
&\Pr\prnt{\balpha > \alpha , \bbeta > \beta} = \prnt{1-F_{\bgamma_{j^*}}(\beta)}&\\
& \quad -\int_0^{e^{-\frac{\beta-b_M}{a_M}}} e^{-\zeta} \exp\brcs{-e^{\frac{1+	b_K - \alpha (1+b_M)}{a_K}} \cdot \zeta^{\frac{\alpha a_M}{a_K}}} \mathrm{d}\zeta&
%o\label{eq:}
\end{align*}
Generally, this integral cannot be reduced to a closed-form. However, if we replace $e^{-\zeta}$ with its series expansion and noting that we can interchange the sum and the integral from Fubini's theorem, we have, 
\begin{align*}
&\Pr\prnt{\balpha > \alpha , \bbeta > \beta}  = \prnt{1-F_{\bgamma_{j^*}}(\beta)} - \sum_{k=0}^\infty \frac{1}{k!} &\\
& \quad \cdot \int_0^{e^{-\frac{\beta-b_M}{a_M}}} (-\zeta)^k e^{-e^{\frac{1+	b_K - \alpha (1+b_M)}{a_K}} \cdot \zeta^{\frac{\alpha a_M}{a_K}}} \mathrm{d}\zeta&\\
&= \prnt{1-F_{\bgamma_{j^*}}(\beta)}&\\
&\qquad - \frac{a_K }{\alpha a_M}\sum_{k=0}^\infty \frac{(-1)^k}{k!} \prnt{e^{\frac{1+	b_K - \alpha (1+b_M)}{a_K}}}^{-\frac{(k+1)a_K}{\alpha a_M}} &\\
&\qquad \cdot  \Gamma\brkt{\frac{(k+1)a_K}{\alpha a_M},0, e^{-\frac{\alpha(\beta-b_M)}{a_K}} e^{\frac{1+	b_K - \alpha (1+b_M)}{a_K}}}&
%\label{eq:}
\end{align*}
where $\Gamma\brkt{s,0,z} = \int_0^z \tau^s e^{-\tau} \mathrm{d}\tau$ is the lower incomplete Gamma function.
Finally, since we are only interested in the marginal distribution $\balpha$, we set $\beta \to -\infty$ to obtain,
\small
\begin{align}
&\Pr(\balpha \leq \alpha ) \nonumber &\\
&=\frac{a_K }{\alpha a_M}\sum_{k=0}^\infty \frac{(-1)^k}{k!} e^{-(k+1)\frac{1+	b_K - \alpha (1+b_M)}{\alpha a_M}}\cdot\Gamma\left[\frac{(k+1)a_K}{\alpha a_M}\right] \label{eq: infinite sum ex1}&\\
& = \sum_{k=0}^\infty \frac{(-1)^k}{(k+1)!} e^{-(k+1)\frac{1+	b_K - \alpha (1+b_M)}{\alpha a_M}}\cdot \Gamma\left[1+\frac{(k+1)a_K}{\alpha a_M}\right]\label{eq: infinite sum ex2} &
%& >  \sum_{k=0}^\infty \frac{(-1)^k}{(k+1)!} \prnt{e^{\frac{1+	b_K - \alpha (1+b_M)}{a_K}}}^{-\frac{(k+1)a_K}{\alpha a_M}} \Gamma\left[\frac{(k+1)a_K}{\alpha a_M}\right]&
%\label{eq:}
\end{align}
\normalsize
where the last step follows from the identity $x \Gamma[x] = \Gamma[x+1]$ for all $x>0$. Thus, Theorem~\ref{theorem: mu-me exact outage} follows.
%\end{proof}
%%%%%%%%%%%%%%%%%%%%%%%%%%%%%%%%%%%%%%%%%%%%%%%%%%%%%%%%%%%%%%%%%%%%%%%%%%%%%%%%%%%%%%%%
\subsection{Proof of Lemma~\ref{lemma: some eve above thr}}
%\begin{proof}
Denote $\mathcal{E} = \brcs{j: \vert \langle \hat{\bh}_{i^*}, \bg_{j}\rangle \vert^2 > u_m}$. Notice that $0 \leq \vert \mathcal{E}\vert \leq M$.

When an eavesdropper sees above-threshold channel projection, then the excess above the threshold follows the exponential distribution with rate $1/a_M$ \cite[Ch. 4.2]{coles2001introduction}. 
To ease notation, let $\bgamma_{i^*} = \Vert \bh_{i^*} \Vert^2$ and let $\bgamma_{\bar{j}} \in \mathcal{E}$. Accordingly, $\bgamma_{i^*} \sim G(a_K, b_K)$ and $\bgamma_{\bar{j}} \sim Exp\brkt{1/a_M}$. Let us define variables transformation $\balpha = \frac{1 + \bgamma_{i^*}}{1 + u_m + \bgamma_{\bar{j}}}$ and $\bbeta = \bgamma_{\bar{j}}$, for which the inverse is $\bgamma_{i^*} = \balpha(1 + u_m + \bgamma_{\bar{j}}) - 1$.  Accordingly, we have
\small
\begin{align*}
&\Pr \prnt{\balpha > \alpha, \bbeta > \beta}&\\
& =  \Pr \prnt{\bgamma_{\bar{j}} > \beta , \bgamma_{i^*} > \alpha\prnt{1 + u_m + \bgamma_{\bar{j}}}-1}&\\
& = \int_\beta^\infty f_{\bgamma_{\bar{j}}}(\gamma_j)\prnt{1 - F_{\bgamma_{i^*}}\prnt{\alpha\prnt{1 + u_m + \bgamma_{\bar{j}}}-1}}\mathrm{d}\gamma_j &\\
& = \prnt{1-F_{\bgamma_{\bar{j}}}(\beta)} &\\
& \qquad - \int_\beta^\infty f_{\bgamma_{\bar{j}}}(\gamma_j)F_{\bgamma_{i^*}}\prnt{\alpha\prnt{1 + u_m + \gamma_{{j}}}-1} \mathrm{d}\gamma_j 	 &
%\label{eq:}
\end{align*} 
\normalsize
Noting that $f_{\bgamma_{\bar{j}}}(\gamma) = e^{-\gamma/a_M}/a_M$ and $F_{\bgamma_{i^*}}(\gamma) = e^{-{e^{-\frac{\gamma - b_K}{a_K}}}}$, we exchange variables such that $e^{-\frac{\alpha \gamma_j}{a_K}} = \zeta$, hence, $\gamma_j = -\frac{a_K}{\alpha} \log (\zeta)$ and $\mathrm{d}\gamma_j = -\frac{a_K}{\alpha \zeta}\mathrm{d}\zeta$. Further, note that $e^{-\frac{\gamma_j}{a_M}}$ is equal to $\zeta^{\frac{a_K}{ \alpha a_M}}$. Thus, we have, 
\begin{align*}
&\Pr\prnt{\balpha > \alpha , \bbeta > \beta} = \prnt{1-F_{\bgamma_{\bar{j}}}(\beta)} &\\
& \qquad -\int_0^{e^{-\frac{\alpha \beta}{a_K}}} \frac{a_K}{\alpha a_M}\zeta^{\frac{a_K}{\alpha a_M} - 1 }\cdot e^{-e^{\frac{1+	b_K - \alpha(1+u_m)}{a_K}} \cdot \zeta} \mathrm{d}\zeta&\\
&=  \prnt{1-F_{\bgamma_{\bar{j}}}(\beta)}  - \frac{a_K}{\alpha a_M} &\\
& \qquad \cdot e^{-\frac{1+ b_K- \alpha(1+u_m)}{\alpha a_M}}\Gamma \brkt{\frac{a_K}{\alpha a_M},0, e^{\frac{1+b_K - \alpha(1+ u_m + \beta)}{a_K}}} 
%\label{eq:}
\end{align*}
To obtain the marginal distribution of the ratio $\balpha$, we set $\beta = 0$, to obtain
\begin{align*}
&\Pr \prnt{\balpha > \alpha} &\\
&= 1 - \frac{a_K}{\alpha a_M} e^{-\frac{1+ b_K - \alpha(1+ u_m )}{\alpha a_M}} \Gamma \brkt{\frac{a_K}{\alpha a_M},0, e^{\frac{1+b_K -\alpha(1+ u_m) }{a_K}}} &
%\label{eq:}
\end{align*}
Note that a $u_m$ can be calculated from the inverse exponential distribution. Namely, $u_m = 2\log(M)$. Note also that for such a threshold the probability that exactly one (strongest) eavesdropper is above-threshold is
\[
\Pr \prnt{\vert \mathcal{E} \vert =1} = \prnt{1-1/M}^{M-1} \to e^{-1} \approx 0.37.
\]
Further, 
\begin{align*}
&\Pr \prnt{\vert \mathcal{E} \vert > 1} = 1 - \prnt{\prnt{1-1/M}^{M-1} + \prnt{1- 1/M}^M }&\\
& \qquad \qquad \quad \to 1 - 2e^{-1} \approx 0.26. &
%\label{eq:}
\end{align*}
Thus, given that some eavesdroppers are above threshold, it is most likely that only one has exceeded. 
%\end{proof}
%%%%%%%%%%%%%%%%%%%%%%%%%%%%%%%%%%%%%%%%%%%%%%%%%%%%%%%%%%%%%
\subsection{Proof of Lemma~\ref{lemma: some bob above thr}}
%\begin{proof}
Denote $\mathcal{U} = \brcs{i: \Vert \bh_{i} \Vert^2 > u_k}$. %Thus, $0 \leq \vert \mathcal{U}\vert \leq K$.

When a user sees above-threshold squared channel norm, then the excess above the threshold, in this case as well,  follows the exponential distribution with rate $1/a_K$ \cite[Ch. 4.2]{coles2001introduction}. 
To ease notation, let $\bgamma_{\bar{i}} \in \mathcal{U}$ and let $\bgamma_{j^*} = \vert \langle \hat{\bh}_{i^*}, \bg_{j^*}\rangle \vert^2$. Accordingly, $\bgamma_{j^*} \sim G(a_M, b_M)$ and $\bgamma_{\bar{i}} \sim Exp\brkt{1/a_K}$. Let us define variables transformation $\balpha = \frac{1 + u_k + \bgamma_{\bar{i}}}{1 + \bgamma_{j^*}}$ and $\bbeta = \bgamma_{j^*}$, for which the inverse is $\bgamma_{\bar{i}} = \balpha(1 + \bgamma_{j^*}) - \prnt{1+u_k}$.  Accordingly, we have
\small
\begin{align*}
&\Pr \prnt{\balpha > \alpha, \bbeta > \beta}&\\
& =  \Pr \prnt{\bgamma_{\bar{i}} > 0, \prnt{1 + u_k +\bgamma_{\bar{i}}}/\alpha -1 > \bgamma_{j^*} > \beta}&\\
& = \int_0^\infty f_{\bgamma_{\bar{i}}}(\gamma_i)\prnt{F_{\bgamma_{j^*}}\prnt{\frac{{1 + u_k + \bgamma_{\bar{i}}}}{\alpha}-1} - F_{\bgamma_{j^*}}\prnt{\beta}}\mathrm{d}\gamma_i &\\
& = \int_0^\infty f_{\bgamma_{\bar{i}}}(\gamma_i)F_{\bgamma_{j^*}}\prnt{\frac{{1 + u_k + \bgamma_{\bar{i}}}}{\alpha}-1} \mathrm{d}\gamma_i  - F_{\bgamma_{j^*}}\prnt{\beta}	 &
%\label{eq:}
\end{align*} 
\normalsize
Noting that $f_{\bgamma_{\bar{i}}}(\gamma) = e^{-\gamma/a_K}/a_K$ and $F_{\bgamma_{j^*}}(\gamma) = e^{-{e^{-\frac{\gamma - b_M}{a_M}}}}$, we exchange variables such that $e^{-\frac{\gamma_i}{\alpha a_M}} = \zeta$, hence, $\gamma_i = -\alpha a_M\log (\zeta)$ and $\mathrm{d}\gamma_i = -\frac{\alpha a_M}{ \zeta}\mathrm{d}\zeta$. Further, note that $e^{-\frac{\gamma_i}{a_K}}$ is equal to $\zeta^{\frac{ \alpha a_M}{a_K}}$. Thus, we have, 
\begin{align*}
&\Pr\prnt{\balpha > \alpha , \bbeta > \beta}&\\
& = \int_0^{1} \frac{\alpha a_M}{a_K}\zeta^{\frac{a_K}{\alpha a_M} - 1 }\cdot e^{-\zeta e^{-\frac{1+	u_k - \alpha(1+b_M)}{\alpha a_M}}} \mathrm{d}\zeta - F_{\bgamma_{j^*}}(\beta)&\\
&=  \frac{\alpha a_M}{a_K} e^{\frac{1+ u_k - \alpha(1+b_M)}{a_K}}\Gamma \brkt{\frac{\alpha a_M}{a_K},0, e^{-\frac{1+u_k - \alpha(1+ b_M)}{\alpha a_M}}}&\\
& \qquad - F_{\bgamma_{j^*}}(\beta) & 
%%\label{eq:}
\end{align*}
Setting $\beta \to  - \infty$, the marginal distribution of $\balpha$ follows.

Note that a $u_k$ can be calculated from the inverse regularized Gamma function. Note that $u_k$ can also be approximated from $b_K$. Herein, it is also most likely that exactly one (strongest) user exceeded 
for such threshold.
%\end{proof}
%%%%%%%%%%%%%%%%%%%%%%%%%%%%%%%%%%%%%%%%%%%%%%
\off{ %%%%%%%%%%%%%%%%%%%%%%%%OFF
\subsection*{Proof of Claim~\ref{claim: unknown eve distribution}}
%\begin{proof}
From Jenssen inequality we obtain 
\begin{align*}
&\log\prnt{1+\Vert \bh_{i^*} \Vert^2} - \mathrm{E}_g \brkt{\log\prnt{1+\vert \bh_{i^*}^\dagger \bg_{j^*} \vert^2}}&\\
&\geq \log\prnt{1+\Vert \bh_{i^*} \Vert^2} -\log\prnt{1+ \mathrm{E}_g\brkt{\vert \bg_{j^*}^\dagger \hat{\bh}_{i^*}\vert^2}}&
%\label{eq:}
\end{align*}
Remember that when $\bg$ scales, the inner product also scales. Specifically, since $\vert \bg_{j}^\dagger \hat{\bh}_{i^*}\vert^2$ is exponential random variable, then for sufficiently large $M$ and $K$,  $\vert \bg_{j^*}^\dagger \hat{\bh}_{i^*}\vert^2 \sim G(a_M, b_M)$, with the expected value $\mathrm{E}\brkt{\vert \bg_{j^*}^\dagger \hat{\bh}_{i^*}\vert^2} = b_M + a_M \gamma $  where $a_M$ and $b_M$ are given in (\ref{eq: a_n}) and (\ref{eq: b_n}), respectively, and $\gamma \approx 0.5772$ is the Euler–Mascheroni constant. Moreover, remember that also $\Vert \bh_{i^*} \Vert^2 \sim G(a_K,a_K)$. Thus, 
\begin{align*}
&\Pr \prnt{\log\prnt{1+\Vert \bh_{i^*} \Vert^2} - \mathrm{E}_g \brkt{\log\prnt{1+\vert \bh_{i^*}^\dagger \bg_{j^*} \vert^2}} \leq R}&\\
&\leq \Pr \prnt{\log \prnt{\frac{1 + P \Vert \bh_{i^*} \Vert^2}{1 + P(b_M + a_M \gamma)}} \leq R}&\\
& = \Pr \prnt{\frac{1 + P \Vert \bh_{i^*} \Vert^2}{1 + P(b_M + a_M \gamma)} \leq \alpha}&
%&= \Pr \prnt{\Vert \bh_{i^*} \Vert^2\leq \alpha \prnt{1/P + (b_M + a_M \gamma)}-1/P}
%\label{eq:}
\end{align*}
where $\alpha =2^R$. Since the denominator is constant then to obtain the expected ratio we take expectation again on $\Vert \bh_{i^*} \Vert^2$ according to the Gumbel distribution, hence
\begin{align*}
&\mathrm{E}\brkt{\frac{1+\Vert P\bh_{i^*}\Vert^2}{1+P \prnt{b_M + a_M \gamma} } } =  \frac{1+P \prnt{b_K + a_K \gamma}}{1+P \prnt{b_M + a_M \gamma}}&
%&\geq  \mathrm{E}\brkt{\frac{1+\Vert P\bh_{i^*}\Vert^2}{1+P u_m}} &\\
%&= \frac{1}{1+P \prnt{b_M + a_M \gamma}}  + \frac{P}{1+P \prnt{b_M + a_M \gamma}} \mathrm{E}\brkt{\Vert \bh_{i^*}\Vert^2}&\\
%&= \frac{1+P \prnt{b_K + a_K \gamma}}{1+P \prnt{b_M + a_M \gamma}}
%%\label{eq:}
\end{align*}
Thus Claim~\ref{claim: unknown eve distribution} follows.
%\end{proof}
} 
%%%%%%%%%%%%%%%%%%% END OFF

%%%%%%%%%%%%%%%%%%%%%%%%%%%%%%%%%%%
%\begin{figure}
%	\centering
%		\includegraphics[scale=0.85]{../max_min_max.eps}
%	\caption{Simulation of ratio in (\ref{eq: sec capacity general}) when transmitting to $\max_i$, while unkown eavesdropper $j^*$ is wiretapping (upper bars), compared to the optimal user selection in the beamforming scheme, when the wiretap channels are known (bottom bars). }
%	\label{fig: max_min_max}
%\end{figure}
%%%%%%%%%%%%%%%%%%%%%%%%%%%%%%%%%%%%%%

\bibliographystyle{IEEEtran}
\bibliography{bibliography}
%\newpage
\end{document}